\newtheorem{observation}{Observation}
\newcommand{\vd}{{\sf vd}}
\newcommand{\ed}{{\sf ed}}
\newcommand{\ea}{{\sf ea}}
\newcommand{\tw}{{\mathbf{tw}}}
\newcommand{\poly}{\textrm{\rm poly}}
\DeclareMathOperator{\operatorClassNP}{{\sf NP}}
\newcommand{\classNP}{\ensuremath{\operatorClassNP}}
\DeclareMathOperator{\operatorClassCoNP}{{\sf coNP}}
\newcommand{\classCoNP}{\ensuremath{\operatorClassCoNP}}
\DeclareMathOperator{\operatorClassFPT}{{\sf FPT}}
\newcommand{\classFPT}{\ensuremath{\operatorClassFPT}}
\DeclareMathOperator{\operatorClassW}{{\sf W}}
\newcommand{\classW}[1]{\ensuremath{\operatorClassW[#1]}}
\newcounter{ctrclaim}[theorem]
\renewcommand{\thectrclaim}{\Alph{ctrclaim}}
\newcommand\displaycase[1]{{\bf #1}}
\newcommand{\clm}[1]{\medskip\phantomsection\refstepcounter{ctrclaim}\noindent\displaycase{Claim \thectrclaim.}{\em #1}\\}
\begin{document}

\title{Editing to a Planar Graph of Given Degrees\thanks{
An extended abstract of this paper appeared in the proceedings of CSR 2015~\cite{DGHPT15}.
The first and fourth author were supported by EPSRC Grant EP/K025090/1.
The research of the second author has received funding from the European Research Council under the European Union's Seventh Framework Programme (FP/2007-2013)/ERC Grant Agreement n. 267959.
The research of the fifth author was co-financed by the European Union (European Social Fund ESF) and Greek national funds through the Operational Program ``Education and Lifelong Learning'' of the National Strategic Reference Framework (NSRF) - Research Funding Program: ARISTEIA II.}}

\author{
Konrad K. Dabrowski\inst{1}, 
Petr A. Golovach\inst{2}, 
Pim van 't Hof\inst{3},
\\
Dani{\"e}l Paulusma\inst{1}, and
Dimitrios M. Thilikos\inst{4}
}

\institute{
School of Engineering and Computing Sciences, Durham University, United Kingdom. E-mail: \texttt{\{konrad.dabrowski,daniel.paulusma\}@durham.ac.uk}
\and
Department of Informatics, University of Bergen, Norway. E-mail: {\tt{petr.golovach@ii.uib.no}}
\and
School of Built Environment, Rotterdam University of Applied Sciences, Rotterdam, the Netherlands. E-mail: \texttt{p.van.t.hof@hr.nl}
\and
Computer Technology Institute and Press ``Diophantus'', Patras, Greece,
Department of Mathematics, National and Kapodistrian University of Athens, Athens, Greece and AlGCo project-team, CNRS, LIRMM, Montpellier, France.
E-mail: {\tt{sedthilk@thilikos.info}}
}

\maketitle

\begin{abstract}
We consider the following graph modification problem.
Let the input consist of a graph $G=(V,E)$, a weight function $w\colon V\cup E\rightarrow \mathbb{N}$, a cost function $c\colon V\cup E\rightarrow \mathbb{N}$ and a degree function $\delta\colon V\rightarrow \mathbb{N}_0$, together with three integers $k_v, k_e$ and~$C$. The question is whether we
can delete a set of vertices of total weight at most~$k_v$ and a set of edges of total weight at most~$k_e$ so that the total cost of the deleted elements is at most~$C$ and every non-deleted vertex~$v$ has degree~$\delta(v)$ in the resulting graph~$G'$.
We also consider the variant in which~$G'$ must be connected.
Both problems are known to be \classNP-complete and \classW{1}-hard when parameterized by $k_v+k_e$.
We prove that, when restricted to planar graphs, they stay \classNP-complete but have polynomial kernels when parameterized by $k_v+k_e$.
\end{abstract}

\section{Introduction}
Graph modification problems capture a variety of graph-theoretic problems and are well studied in algorithmic graph theory. The aim is to modify some given graph~$G$ into some other graph~$H$ that satisfies a {\it certain property}
by applying a bounded number of operations from a set~$S$ of {\it prespecified graph operations}.
Well-known graph operations are the edge addition, edge deletion and vertex
deletion, denoted by $\ea, \ed$ and~$\vd$, respectively.
For example, if $S=\{\vd\}$ and~$H$ must be a clique or independent set then we obtain the basic problems
{\sc Clique} and {\sc Independent Set}, respectively. To give a few more examples, if~$H$ must be a forest and
$S=\{\ed\}$ or $S=\{\vd\}$ then we obtain the problems {\sc Feedback Edge Set} and {\sc Feedback Vertex Set}, respectively. As discussed in detail later, it is also common to consider sets~$S$ consisting of more than one graph operation.

A property is {\it hereditary} if it holds for any induced subgraph of a graph that satisfies it, and a property is {\it non-trivial} if it is both true for infinitely many graphs and false for infinitely many graphs.
A classic result of Lewis and Yannakakis~\cite{LewisY80} is that a vertex deletion problem is
\classNP-hard for any property that is hereditary and non-trivial.
In an earlier paper Yannakakis~\cite{Yannakakis78} also showed that the edge deletion problem is \classNP-complete for several properties, such as being planar or outer-planar.
Natanzon, Shamir and Sharan~\cite{NatanzonSS01} and Burzyn, Bonomo and Dur{\'a}n~\cite{BurzynBD06} proved that the graph modification problem is \classNP-complete when $S=\{\ea,\ed\}$ and the desired property is to belong to some hereditary graph class for a variety of such graph classes.

When a problem turns out to be \classNP-hard, a possible next step might be to consider it in the more refined framework offered by {\it parameterized complexity}. This is certainly an appropriate direction to follow for graph modification problems, because the bound on the total number of permitted operations is a natural parameter~$k$.
Cai~\cite{Cai96} proved that for this parameter the graph modification problem is \classFPT\ if $S=\{\ea,\ed,\vd\}$ and
the desired property is to belong to any fixed graph class characterized by a finite set of forbidden induced subgraphs.
Khot and Raman~\cite{KhotR02} determined all non-trivial hereditary properties for which the vertex deletion problem is \classFPT\ on $n$-vertex graphs with parameter $n-\nobreak k$ and proved that for all other such properties the problem is \classW{1}-hard (when parameterized by $n-\nobreak k$).

From the aforementioned results we conclude that the graph modification problem has been thoroughly studied for hereditary properties. However, for other types of properties, much less is known.
Dabrowski et al.~\cite{DGHP14} combined previous results~\cite{BoeschST77,CaiY11,CyganMPPS14} with new results to
classify the (parameterized) complexity of the problem of modifying the input graph into a connected graph where
each vertex has some prescribed degree parity for all $S\subseteq \{\ea,\ed,\vd\}$.

In this paper we consider the case when the vertices of the resulting graph must satisfy some prespecified degree constraints (note that such properties are non-hereditary, so the results of Lewis and Yannakakis do not apply to this case). Before presenting our results, we briefly discuss the known results and the general framework they fall under.

Moser and Thilikos in~\cite{MoserT09} and Mathieson and Szeider~\cite{MathiesonS12}
initiated an investigation into the parameterized complexity of such graph modification problems.
In particular, Mathieson and Szeider~\cite{MathiesonS12} introduced the following general problem.
\begin{center}
\begin{boxedminipage}{.99\textwidth}
\textsc{Degree Constraint Editing($S$)}\\
\begin{tabular}{ r p{0.8\textwidth}}
\textit{~~~~Instance:} & A graph~$G$, integers $d,k$ and a function
                                  $\delta\colon V(G)\rightarrow\{1,\ldots,d\}$.\\
\textit{Question:} & Can~$G$ be modified into a graph~$G'$ such
that $d_{G'}(v)=\delta(v)$ for each $v\in V(G')$ using at most~$k$ operations
from the set~$S$?
\end{tabular}
\end{boxedminipage}
\end{center}
Mathieson and Szeider~\cite{MathiesonS12} classified the parameterized complexity of this problem for
$S\subseteq \{\ea,\ed,\vd\}$.
In particular they showed the following results.
If $S\subseteq \{\ea,\ed\}$ then the problem is polynomial-time solvable.
If $\vd\in S$ then the problem is \classNP-complete, \classW{1}-hard with parameter~$k$ and \classFPT\ with parameter~$d+k$.
Moreover, they proved that the latter result holds even for a more general version, in which the vertices and edges have costs and
the desired degree for each vertex should be in some given subset of $\{1,\ldots,d\}$.
If $S\subseteq \{\ed,\vd\}$, they proved that the problem has a polynomial kernel when parameterized by~$d+k$.
Golovach~\cite{Golovach14a} considered the cases $S=\{\ea,\vd\}$ and $S=\{\ea,\ed,\vd\}$ and proved (amongst other results) that for these cases
the problem has no polynomial kernel 
when parameterized by $d+k$
unless $\classNP\subseteq\classCoNP/\text{\rm poly}$.
Froese, Nichterlein and Niedermeier~\cite{FroeseNN14} gave more kernelization results for {\sc Degree Constraint Editing($S$)}.
Golovach~\cite{Golovach14} introduced a variant of {\sc Degree Constraint Editing($S$)} in which we additionally insist that the resulting graph must be {\it connected}. He proved that, for $S=\{\ea\}$, this variant is \classNP-complete, \classFPT\ 
when parameterized by~$k$, and has a polynomial kernel when parameterized by $d+k$.
The connected variant is readily seen to be \classW{1}-hard when $\vd\in S$
by a straightforward modification of the proof of the \classW{1}-hardness result for {\sc Degree Constraint Editing$(S)$}, when $\vd\in S$, as given by Mathieson and Szeider~\cite{MathiesonS12}.

In the light of the above \classNP-completeness and \classW{1}-hardness results (when $\vd \in S)$ it is natural to
restrict the input graph~$G$ to a special graph class.
Hence, inspired by the above results, we consider the set $S=\{\ed,\vd\}$ and study weighted versions of both variants (where we insist that the resulting graph is connected and where we don't) of these problems for {\it planar} input graphs.
In fact the problems we study are even more general.
The problem variant not demanding connectivity is defined as follows.

\begin{center}
\begin{boxedminipage}{.99\textwidth}
\textsc{Deletion to a Planar Graph of Given Degrees (DPGGD)}\\
\begin{tabular}{ r p{0.8\textwidth}}
\textit{~~~~Instance:} & A planar graph $G=(V,E)$, integers $k_v,k_e,C$ and functions
                                 $\delta\colon V\rightarrow \mathbb{N}_0$, $w\colon V\cup E\rightarrow \mathbb{N}$,
                                   $c\colon V\cup E\rightarrow \mathbb{N}_0$.\\
\textit{Question:} & Can~$G$ be modified into a graph~$G'$ by deleting a set $U\subseteq V$ with $w(U)\leq k_v$ and a set $D\subseteq E$ with $w(D)\leq k_e$ such that $c(U\cup D)\leq C$ and $d_{G'}(v)=\delta(v)$ for $v\in V(G')$?
\end{tabular}
\end{boxedminipage}
\end{center}
In the above problem,~$w$ is the \emph{weight} and~$c$ is the \emph{cost} function. The question is whether it is possible to delete vertices and edges of total weight at most~$k_v$ and~$k_e$, respectively, so that the total cost of the deleted elements is at most~$C$ and the obtained graph satisfies the degree restrictions prescribed by the given function~$\delta$.

The second problem we consider is the variant of DPGGD, in which the desired graph~$G'$ must be connected. We call this variant the \textsc{Deletion to a Connected Planar Graph of Given Degrees} problem (DCPGGD).

\medskip
\noindent
{\bf Our Results}.
We note that \textsc{DPGGD} is \classNP-complete even if $\delta\equiv 3, w\equiv 1,\allowbreak c\equiv\nobreak 0$ and $k_v=|V(G)|-1$, and \textsc{DCPGGD} is \classNP-complete
even if $\delta\equiv 2,\allowbreak w\equiv\nobreak 1,\allowbreak c\equiv 0$
and $k_v=0$.
These observations follow directly from the respective facts that both testing
whether a planar graph of degree at most~$7$ has a non-trivial cubic
subgraph~\cite{Stewart94} is \classNP-complete and testing whether a cubic planar graph has a
Hamiltonian cycle~\cite{GareyJT76} is \classNP-complete.
In contrast to the aforementioned \classW{1}-hardness results for general graphs,
our two main results are that both DPGGD and DCPGGD have polynomial kernels when parameterized by~$k_v+\nobreak k_e$. Note that the integer~$C$ is neither a constant nor a parameter but part of the input.
In order to obtain our results we first show that both problems are
polynomial-time solvable
for any graph class of bounded treewidth. We then use the \emph{protrusion decomposition/replacement} techniques introduced by Bodlaender at al.~\cite{BodlaenderFLPST09} (see~\cite{BodlaenderFLPST09a} for the full text). These techniques were successfully used for various problems on sparse
graphs~\cite{FominLST12,GarneroPST14,GarneroST14,KimLPRRSS13}. We stress that \textsc{DPGGD} and \textsc{DCPGGD} do not fit in the meta-kernelization framework of Bodlaender at al.~\cite{BodlaenderFLPST09}. Hence our approach is, unavoidably, problem-specific.

\section{Preliminaries}\label{sec:defs}
\noindent
All graphs in this paper are finite, undirected and without loops or multiple
edges. The vertex set of a graph~$G$ is denoted by~$V(G)$ and the edge set is denoted by~$E(G)$.
For a set $X\subseteq V(G)$, we let~$G[X]$ denote the subgraph of~$G$ induced by~$X$. We write $G-X=G[V(G)\setminus X]$; we allow the case where $X\not\subseteq V(G)$.
If $X=\{x\}$, we may write $G-x$ instead.
For a set $L\subseteq E(G)$, we let
$G-L$ be the graph obtained from~$G$ by deleting all edges of~$L$. If $L=\{e\}$ then we may write $G-e$ instead.
For $v\in V(G)$, let $E_G(v)=\{e\in E(G)\mid e\text{ is incident to }v\}$.
For $X\subseteq V(G)$, let $E_G(X)=\bigcup_{v\in X}E_G(v)$.
For $e\in E(G)$ with $e=uv$, let $V(e)=\{u,v\}$.
For a set $L\subseteq E(G)$ let $V(L)=\cup_{e\in L}V(e)$.

Let~$G$ be a graph.
For a vertex~$v$, we let~$N_G(v)$ denote its
\emph{(open) neighbourhood}, that is, the set of vertices adjacent to~$v$.
The \emph{degree} of a vertex~$v$ is denoted by $d_G(v)=|N_G(v)|$.
For a set $X\subseteq V(G)$, we write $N_G(X)=(\bigcup_{v\in X}N_G(v))\setminus X$.
The \emph{closed neighbourhood} $N_G[v]=N_G(v)\cup \{v\}$, and for a non-negative integer~$r$, $N_G^r[v]$ is the set of vertices at distance at most~$r$ from~$v$;
note that $N_G^0[v]=\{v\}$ and that $N_G^1[v]=N_G[v]$.
For a set $X\subseteq V(G)$ and a positive integer~$r$, let $N_G^r[X]=\bigcup_{v\in X}N_G^r[v]$.
For a positive integer~$r$, a set $X\subseteq V(G)$ is an \emph{$r$-dominating} set of~$G$ if $V(G)\subseteq N_G^r[X]$.
For a set $X\subseteq V(G)$, $\partial_G(X)=X\cap N_G(V(G)\setminus X)$ is the \emph{boundary} of~$X$ in~$G$.

A \emph{tree decomposition} of a graph~$G$ is a pair $(\mathcal{X},T)$ where~$T$
is a tree and $\mathcal{X}=\{X_i \mid i\in V(T)\}$ is a collection of subsets (called {\em bags})
of~$V(G)$ such that
\begin{enumerate}[(i)]
\item $\bigcup_{i \in V(T)} X_i = V(G)$,
\item for each edge $xy \in E(G)$, $x,y\in X_i$ for some $i\in V(T)$, and
\item for each $x\in V(G)$, the set $\{ i \mid x \in X_i \}$ induces a connected subtree of~$T$.
\end{enumerate}
The \emph{width} of a tree decomposition $(\{ X_i \mid i \in V(T) \},T)$ is $\max_{i \in V(T)}\,\{|X_i| - 1\}$. The \emph{treewidth} of a graph~$G$ (denoted~$\tw(G)$) is the minimum width over all tree decompositions of~$G$.
A tree decomposition $(\mathcal{X},T)$ of a graph~$G$ is \emph{nice}, if~$T$ is a rooted binary tree such that the nodes of~$T$ are of four types:
\begin{enumerate}[(i)]
\item a \emph{leaf node}~$i$ is a leaf of~$T$ with $X_i=\emptyset$;
\item an \emph{introduce node}~$i$ has one child~$i'$ with $X_i=X_{i'}\cup\{v\}$ for some vertex $v\in V(G)$;
\item a \emph{forget node}~$i$ has one child~$i'$ with $X_i=X_{i'}\setminus\{v\}$ for some vertex $v\in V_G$; and
\item a \emph{join node}~$i$ has two children~$i'$ and~$i''$ with $X_i=X_{i'}=X_{i''}$,
\end{enumerate}
and, moreover, the root~$r$ is a forget node with $X_r=\emptyset$.
Kloks~\cite{Kloks94} proved that every tree decomposition of a graph can be converted in linear time to a nice tree decomposition of the same width such that the size of the obtained tree is linear in the size of the original tree.

We need the following known observation, which is valid for every planar bipartite graph~$G$ in which the vertices of one partition class~$V_2$ have degree at least~$3$ (in order to prove this, note that $3|V_2|\leq\sum_{v\in V_2}d_G(v)=|E(G)|\leq 2|V(G)|-4$, as~$G$ is bipartite and planar).

\begin{lemma}\label{lem:bound-bip}
Let~$V_1$ and~$V_2$ be bipartition classes of a planar bipartite graph~$G$ such that $d_G(v)\geq 3$ for every $v\in V_2$ and~$V_2$ is non-empty.
Then $|V_2|\leq 2|V_1|-4$.
\end{lemma}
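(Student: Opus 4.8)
The plan is to count the edges of $G$ in two ways and combine the estimates. Since $V_1$ and $V_2$ are the two sides of a bipartition, every edge of $G$ has exactly one endpoint in $V_2$; summing degrees over $V_2$ therefore counts each edge precisely once, so $|E(G)|=\sum_{v\in V_2}d_G(v)$. The degree hypothesis $d_G(v)\geq 3$ for all $v\in V_2$ then gives the lower bound $|E(G)|\geq 3|V_2|$.

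For a matching upper bound I would invoke the standard edge bound for simple bipartite planar graphs, namely $|E(G)|\leq 2|V(G)|-4$, valid whenever $|V(G)|\geq 3$. Here $|V(G)|\geq 3$ is guaranteed: as $V_2$ is non-empty, it contains a vertex with at least three neighbours, which already forces $|V_1|\geq 3$. The bound itself is a consequence of Euler's formula: a simple bipartite graph has no odd cycle, so its girth is at least~$4$, and hence in any plane embedding every face boundary is a closed walk of even length at least~$4$; double counting edge--face incidences against the inequality $|V(G)|-|E(G)|+f\geq 2$ (which holds for any non-empty planar graph, as the right-hand side of Euler's formula is $1+c\geq 2$) yields $|E(G)|\leq 2|V(G)|-4$. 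The only situation in which this face-counting step does not apply is when $G$ is acyclic, but then $G$ is a forest and $|E(G)|\leq|V(G)|-1\leq 2|V(G)|-4$ for $|V(G)|\geq 3$, so the bound holds regardless.

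Combining the two estimates gives
\[
3|V_2|\leq|E(G)|\leq 2|V(G)|-4=2(|V_1|+|V_2|)-4,
\]
and subtracting $2|V_2|$ from both sides yields $|V_2|\leq 2|V_1|-4$, as claimed. The degree-sum identity and the final rearrangement are routine; the only point genuinely requiring care is the correct application of the edge bound $|E(G)|\leq 2|V(G)|-4$, in particular ensuring it covers the acyclic and possibly disconnected cases uniformly with the generic cyclic, connected one.
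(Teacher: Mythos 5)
Your proof is correct and follows essentially the same route as the paper's: the paper likewise combines the degree-sum identity $3|V_2|\leq\sum_{v\in V_2}d_G(v)=|E(G)|$ with the bipartite planar edge bound $|E(G)|\leq 2|V(G)|-4$ and rearranges. Your additional care with the acyclic and disconnected cases of the edge bound is sound, though the paper treats that bound as standard and does not spell it out.
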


\noindent
{\bf Protrusion decompositions.} For a graph~$G$ and a positive integer~$r$, a set $X\subseteq\nobreak V(G)$ is an \emph{$r$-protrusion} of~$G$ if
$|\partial_G(X)|\leq r$ and $\tw(G[X])\leq r$.
For positive integers~$s$ and~$s'$, an $(s,s')$-protrusion decomposition of a graph~$G$ is a partition $\Pi=\{R_0,\ldots,R_p\}$ of~$V(G)$ such that
\begin{enumerate}[(i)]
\item $\max\{p,|R_0|\}\leq s$,
\item for each $i\in\{1,\ldots,p\}$, $R_i^+=N_G[R_i]$ is an $s'$-protrusion of~$G$, and
\item for each $i\in\{1,\ldots,p\}$, $N_G(R_i)\subseteq R_0 \cap \partial_G[R_i^+]$.
\end{enumerate}
Originally, condition (iii) only demanded that $N_G(R_i)\subseteq R_0$ holds for each $i\in \{1,\ldots,p\}$. However, we can move every vertex
in $N_G(R_i)\setminus \partial_G[R_i^+]$ to~$R_i$ without affecting any of the other properties. Hence we assume without loss of generality that such vertices do not exist and may indeed state condition (iii) as above (which is convenient for our purposes).
The sets $R_1^+,\ldots,R_p^+$ are called the \emph{protrusions} of~$\Pi$.

The following statement is implicit in~\cite{BodlaenderFLPST09a} (see Lemmas~6.1 and 6.2).
\begin{lemma}[\cite{BodlaenderFLPST09a}]\label{lem:protr}
Let~$r$ and~$k$ be positive integers and let~$G$ be a planar graph that has an $r$-dominating set of size at most~$k$.
Then~$G$ has an $(O(kr),O(r))$-protrusion decomposition, which can be constructed in
polynomial time.
\end{lemma}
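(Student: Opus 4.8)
The plan is to anchor the decomposition on a multi-source breadth-first search from the given $r$-dominating set~$S$ (with $|S|\le k$), cashing in two properties of planar graphs: their linear local treewidth and the bipartite bound of Lemma~\ref{lem:bound-bip}. Running BFS simultaneously from all of~$S$ yields distance layers $L_0=S,L_1,\dots$; since~$S$ is $r$-dominating only $L_0,\dots,L_r$ are non-empty, and the associated BFS forest~$F$ has at most~$k$ trees, each of depth at most~$r$. Assigning to every vertex~$v$ the root $h(v)\in S$ of its tree partitions $V(G)$ into \emph{territories} $T_s=h^{-1}(s)$, and by construction $T_s\subseteq N_G^{r}[s]$.

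First I would secure the treewidth guarantee for the eventual protrusions. Planar graphs have linear local treewidth, so every ball $N_G^{r+1}[s]$ has treewidth $O(r)$. Hence, provided each region~$R_i$ of the decomposition is kept inside a single territory, we have $R_i^+=N_G[R_i]\subseteq N_G^{r+1}[s]$ and therefore $\tw(G[R_i^+])=O(r)$, which supplies the parameter $s'=O(r)$.

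Next I would build~$R_0$ and bound both $|R_0|$ and the number of regions by $O(kr)$. I would put~$S$ into~$R_0$ and then, processing layer by layer, add the \emph{junction} vertices: those adjacent (inside the relevant layer) to at least three distinct territories. To count the junctions of a fixed layer, I would form an auxiliary bipartite graph whose first class~$V_1$ is a planar contraction of the at most~$k$ territories and whose second class~$V_2$ consists of these junctions, each having degree at least~$3$; this graph is planar, so Lemma~\ref{lem:bound-bip} gives $|V_2|=O(k)$, and summing over the $r+1$ layers yields $|R_0|=O(kr)$. The regions~$R_i$ are then the connected pieces of $G-R_0$, grouped so that each lies in a single territory; condition~(iii) holds because each region's neighbourhood is contained in the junction set~$R_0$, and an Euler-formula count of the planar territory arrangement (again through Lemma~\ref{lem:bound-bip}) bounds both the number of regions (by $O(kr)$) and each boundary $|\partial_G(R_i^+)|$ (by $O(r)$).

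The hard part is precisely this simultaneous control of the number of regions, their boundary sizes, and their treewidth. Junctions carrying only two territories are deliberately \emph{not} placed in~$R_0$, so one must show that the remaining interface between two adjacent territories can be separated using only $O(r)$ additional vertices---routed along the short BFS paths back to the two roots---without merging pieces across many territories or inflating any single boundary; this is the delicate, genuinely planar step, and it fails for non-planar inputs. Finally, all ingredients---the BFS, the territory contraction, the auxiliary bipartite graph, computing the components, and producing a width-$O(r)$ tree decomposition of each $N_G^{r+1}[s]$---run in polynomial time, giving the claimed polynomial-time construction.
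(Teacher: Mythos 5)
First, note that the paper does not prove this lemma at all: it is imported verbatim from Bodlaender et al.\ (Lemmas~6.1 and~6.2 of the full version of the meta-kernelization paper), so there is no in-paper argument to compare against. Your sketch follows the right general philosophy (BFS territories from the $r$-dominating set, local treewidth of planar balls for the $O(r)$ width bound, Euler-type counting for the $O(kr)$ bound), but as written it has a genuine gap, and you have in fact pointed at it yourself.

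The gap is twofold. First, your intermediate construction is internally inconsistent: you place into $R_0$ only the set $S$ and the ``junction'' vertices adjacent to at least three territories, and then take the regions to be connected pieces of $G-R_0$ ``grouped so that each lies in a single territory.'' But a component of $G-R_0$ is not confined to a single territory: two adjacent territories whose entire interface consists of vertices seeing only those two territories are not separated by any junction, so they merge into one component, and your treewidth argument (which needs $R_i^+\subseteq N_G^{r+1}[s]$ for a single root $s$) no longer applies. Second, the repair you gesture at --- cutting each two-territory interface by BFS paths of length at most $r$ back to the two roots --- is exactly the hard content of the lemma, and you do not prove it: one must argue that a suitable (maximal) family of such cuts produces only $O(k)$ regions in total (an Euler-formula argument on the planar multigraph whose vertices are the roots and whose edges are the regions), that each resulting region has boundary $O(r)$ (two bounding paths plus $O(1)$ corner vertices), and that the regions together with $R_0$ still partition $V(G)$. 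This is the Alber--Fellows--Niedermeier-style region decomposition that the cited Lemmas~6.1--6.2 actually carry out, and without it the simultaneous bounds $p=O(kr)$, $|R_0|=O(kr)$ and $|\partial_G(R_i^+)|=O(r)$ are asserted rather than established. (Your junction count via Lemma~\ref{lem:bound-bip} also needs care, since a junction vertex lies inside one of the territories you propose to contract, so the auxiliary graph is not literally bipartite as described; this is fixable but should be spelled out.)
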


\medskip
\noindent
{\bf Parameterized Complexity.}
Parameterized complexity is a two dimensional framework
for studying the computational complexity of a problem. One dimension is the input size~$n$ and the other is a parameter~$k$. A problem is said to be \emph{fixed parameter tractable} (or \classFPT) if it can be solved in time $f(k)\cdot n^{O(1)}$ for some function~$f$.
A \emph{kernelization} for a parameterized problem is a polynomial algorithm that maps each instance $(x,k)$ with input~$x$ and parameter~$k$ to an instance $(x',k')$ such that
\begin{enumerate}[(i)]
\item $(x,k)$ is a
yes-instance if and only if $(x',k')$ is a yes-instance, and
\item the size of~$x'$ and~$k'$ is bounded by~$f(k)$ for a computable function~$f$.
\end{enumerate}
The output $(x',k')$ is called a \emph{kernel}. The function~$f$ is said to be the \emph{size} of the kernel. A kernel is \emph{polynomial} if~$f$ is polynomial.
We refer to the books of Downey and Fellows~\cite{DowneyF13},
Flum and Grohe~\cite{FlumG06}, and Niedermeier~\cite{Niedermeierbook06} for detailed introductions to parameterized complexity.

\section{The Polynomial Kernels}\label{sec:ker}

In this section we construct polynomial kernels for \textsc{DPGGD} and
\textsc{DCPGGD}.
We say that a pair $(U,D)$ with $U\subseteq V(G)$ and $D\subseteq E(G)$ is a \emph{solution} for an instance
$(G,k_v,k_e,C,\delta,w,c)$ of
DPGGD if $w(U)\leq k_v$, $w(D)\leq k_e$ and $c(U\cup D)\leq C$ and $G'=G-U-D$ satisfies
$d_{G'}(v)=\delta(v)$ for all $v\in V(G')$. If $(G,k_v,k_e,C,\delta,w,c)$ is an instance of DCPGGD then
$(U,D)$ is a solution if in addition~$G'$ is connected.
Notice that it can happen that $U=V(G)$ for a solution $(U,D)$.

In order to prove our main results, we first need to introduce some additional terminology and prove some structural results.
We say that a solution $(U,D)$ for an instance of \textsc{DPGGD} or \textsc{DCPGGD} is \emph{efficient} if~$D$ has no edges incident to the vertices of~$U$. We say that a solution $(U,D)$ is of \emph{minimum cost} if $c(\hat{U},\hat{D}) \geq c(U,D)$ for every solution $(\hat{U},\hat{D})$.
We make two observations.

\begin{observation}\label{obs:norm-1}
Any yes-instance of \textsc{DPGGD} or \textsc{DCPGGD} has an efficient solution of minimum cost.
\end{observation}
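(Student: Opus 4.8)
The plan is to take an arbitrary minimum-cost solution and simply discard from its edge-deletion set every edge that is already incident to a deleted vertex; such edges are redundant, because deleting a vertex also deletes all edges incident to it. First I would argue that a minimum-cost solution exists at all: the instance is a yes-instance, so the set of solutions is non-empty, and since $c$ takes values in $\mathbb{N}_0$, the values $c(U\cup D)$ ranging over all solutions $(U,D)$ form a non-empty set of non-negative integers and hence attain a minimum.

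So fix a minimum-cost solution $(U,D)$. Let $D^{*}=\{e\in D\mid V(e)\cap U\neq\emptyset\}$ be the set of edges of $D$ incident to $U$, and put $D'=D\setminus D^{*}$. By construction no edge of $D'$ is incident to a vertex of $U$, so $(U,D')$ is efficient; it remains to verify that $(U,D')$ is again a minimum-cost solution. The crucial point is the identity $G-U-D'=G-U-D$: deleting the vertex set $U$ already removes every edge incident to $U$, in particular all edges of $D^{*}$, so deleting $D'=D\setminus D^{*}$ afterwards removes exactly the same remaining edges as deleting $D$ would. Hence both operations produce the same graph $G'$, and therefore $(U,D')$ satisfies the degree conditions $d_{G'}(v)=\delta(v)$ for all $v\in V(G')$ (and, in the DCPGGD case, $G'$ is connected) precisely because $(U,D)$ does.

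For the weight and cost bounds, note that $D'\subseteq D$ yields $w(D')\le w(D)\le k_e$, while $w(U)\le k_v$ is unchanged. Since $U\subseteq V$ and $D^{*}\subseteq E$ are disjoint, $c(U\cup D')=c(U\cup D)-c(D^{*})\le c(U\cup D)\le C$, so $(U,D')$ is indeed a solution. Finally, as $(U,D)$ was chosen of minimum cost and $(U,D')$ is also a solution, $c(U\cup D')\ge c(U\cup D)$; combined with the reverse inequality just established, $c(U\cup D')=c(U\cup D)$, so $(U,D')$ is a minimum-cost solution as well, and it is efficient. The argument is entirely elementary, and there is no genuine obstacle; the only point requiring care is the graph identity $G-U-D'=G-U-D$, which is exactly what makes discarding the edges of $D^{*}$ harmless for the degree, connectivity, weight and cost requirements.
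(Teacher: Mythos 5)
Your argument is correct and is exactly the reasoning the paper intends: the paper states this as an unproved Observation precisely because discarding from $D$ the edges incident to $U$ leaves the resulting graph $G-U-D$ unchanged, so efficiency can be achieved without increasing cost. No gaps; the one point needing care (the identity $G-U-D'=G-U-D$) is the one you identified and justified.
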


\begin{observation}\label{obs:norm-2}
Let $(G,k_v,k_e,C,\delta,w,c)$ be instance of
\textsc{DPGGD} or \textsc{DCPGGD} that has an efficient solution $(U,D)$.
If $d_G(v)=\delta(v)$ for some $v\in V(G)$ then~$v$ is not incident to an edge of~$D$.
\end{observation}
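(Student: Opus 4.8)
The plan is to argue by contradiction, using the efficiency hypothesis to exclude the possibility that $v$ itself is deleted, and then invoking the degree constraint. So suppose, for the sake of contradiction, that $v$ is incident to some edge $e\in D$.

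First I would dispose of the case $v\in U$. By definition, an \emph{efficient} solution has the property that $D$ contains no edge incident to a vertex of $U$; hence if $v$ were in $U$, then the edge $e\in D$ incident to $v$ would immediately contradict efficiency. Therefore $v\notin U$, and consequently $v\in V(G')$, where $G'=G-U-D$.

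It then remains to exploit the degree constraint. Since $(U,D)$ is a solution and $v\in V(G')$, we have $d_{G'}(v)=\delta(v)$, and by hypothesis $\delta(v)=d_G(v)$. On the other hand, $G'=G-U-D$ is a subgraph of $G-e$, and since $e$ is incident to $v$ we have $d_{G-e}(v)=d_G(v)-1$; deleting the remaining edges of $D$ and the vertices of $U$ can only decrease the degree of $v$ further, so $d_{G'}(v)\leq d_G(v)-1<d_G(v)$. Comparing these facts yields $d_{G'}(v)=\delta(v)=d_G(v)$ and simultaneously $d_{G'}(v)<d_G(v)$, a contradiction.

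I do not expect any genuine obstacle here: the statement follows from the elementary monotonicity of degree under vertex and edge deletion, once the efficiency hypothesis has been used to rule out the case $v\in U$. The only point requiring care is precisely this use of efficiency, since without it a non-efficient solution could delete $v$ together with an incident edge of $D$, in which case no degree constraint would apply to $v$ and the conclusion could fail.
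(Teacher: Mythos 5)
Your proof is correct and is exactly the argument the paper leaves implicit (the observation is stated without proof): efficiency rules out $v\in U$, so $v$ survives into $G'$, and then $d_{G'}(v)\leq d_G(v)-1<d_G(v)=\delta(v)$ contradicts the degree requirement.
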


We say that an instance $(G,k_v,k_e,C,\delta,w,c)$ of
\textsc{DPGGD} (\textsc{DCPGGD} respectively) is \emph{normalized} if
\begin{enumerate}[(i)]
\item for every $v\in V(G)$, $\delta(v)\leq d_G(v)\leq \delta(v)+k_v+k_e$, and
\item every vertex~$v$ in the set $S=\{u\in V(G)\mid d_G(u)=\delta(u)\}$ is adjacent to a vertex in
$\overline{S}=V(G)\setminus S$.
\end{enumerate}

\begin{lemma}\label{lem:normalization}
There is a polynomial-time algorithm that for each instance of \textsc{DPGGD} or \textsc{DCPGGD} either solves the problem or returns an equivalent normalized instance.
\end{lemma}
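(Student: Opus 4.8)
The plan is to introduce two groups of reduction rules, one per condition in the definition of a normalized instance, and to show that each rule is \emph{forced}: every efficient solution (one exists by Observation~\ref{obs:norm-1}) must respect it, so applying the rule yields an equivalent instance. Throughout I work with efficient solutions and use Observation~\ref{obs:norm-2}, namely that a vertex $v$ with $d_G(v)=\delta(v)$ is never incident to a deleted edge.

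\emph{Enforcing condition (i).} I use two deletion rules. If some $v$ has $d_G(v)<\delta(v)$, then since deletions only decrease degrees, $v$ can never reach $\delta(v)$ and must lie in $U$; I delete $v$ and set $k_v\leftarrow k_v-w(v)$, $C\leftarrow C-c(v)$. If some $v$ has $d_G(v)>\delta(v)+k_v+k_e$, I again argue $v\in U$: in an efficient solution with $v\notin U$, the $d_G(v)-\delta(v)$ edges at $v$ that vanish split into those in $D$ (say $a$ of them) and those whose other endpoint lies in $U$ (say $b$ of them); as all weights are at least $1$ we get $a\le w(D)\le k_e$ and $b\le w(U)\le k_v$, contradicting $a+b=d_G(v)-\delta(v)>k_v+k_e$. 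Hence such a $v$ may also be deleted. I apply both rules exhaustively; each application removes a vertex, so the process halts after at most $|V(G)|$ rounds in polynomial time, and if ever $k_v<0$ or $C<0$ I output \textbf{no}. Afterwards every surviving vertex satisfies $\delta(v)\le d_G(v)\le\delta(v)+k_v+k_e$, and equivalence is immediate since each deleted vertex was forced into $U$.

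\emph{Enforcing condition (ii).} Write $S=\{v:d_G(v)=\delta(v)\}$; by (i) every $v\in\overline S$ has $d_G(v)>\delta(v)$. The key structural fact is a \emph{dichotomy}: in any efficient solution, each connected component $K$ of $G[S]$ is \emph{either kept entirely or deleted entirely}. Indeed, if $u\in S$ is kept then by Observation~\ref{obs:norm-2} no edge at $u$ lies in $D$, so $d_{G'}(u)=\delta(u)=d_G(u)$ forces every neighbour of $u$ to be kept; propagating along $G[S]$ gives the claim. I use this to eliminate the \emph{interior} $S$-vertices (those with no neighbour in $\overline S$), which are exactly the vertices violating (ii). First, any component $K$ of $G[S]$ that is a whole component of $G$ is already fully satisfied; for DPGGD I delete $K$ from the instance with budgets untouched, which is equivalent because $K$ can always be kept intact and never interacts with the rest. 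For a component $K$ of $G[S]$ that meets $\overline S$, I replace $K$ by a small \emph{boundary gadget}: for each $w$ in $B=N_G(K)\cap\overline S$ I realise its multiplicity $m_w$ of edges into $K$ by attaching $w$ to $m_w$ distinct gadget vertices, I make the gadget connected with all its vertices satisfied (so the dichotomy still applies and every gadget vertex is adjacent to $B\subseteq\overline S$), and I redistribute the weight $w(K)$ and cost $c(K)$ over the gadget so that deleting it costs exactly what deleting $K$ did. Keeping the gadget reproduces ``$K$ kept'' (each $w$ retains $m_w$ locked edges) and deleting it reproduces ``$K$ deleted'' (each $w$ loses $m_w$ edges at the same total cost), so the instance is equivalent; now every surviving $S$-vertex touches $\overline S$, as required, and this interface property is exactly what later feeds Lemma~\ref{lem:bound-bip}.

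\emph{Main obstacle and the connected variant.} The technical heart is the gadget for condition (ii). It must (a) preserve the all-or-nothing fate of the whole interface simultaneously; in particular a cut-vertex of $K$ must not be allowed to break the component into pieces with independent fates, which is precisely why the gadget is forced to be connected rather than obtained by deleting interior vertices one at a time. It must (b) reproduce every multiplicity $m_w$ exactly, since the $m_w-1$ ``extra'' edges at $w$ are charged to the deletion of $K$ when $K$ is deleted but are locked when $K$ is kept, so they cannot be absorbed into $\delta(w)$ through a single edge. Finally it must (c) be planar and attach to $B$ planarly, which I obtain by drawing the gadget inside a face incident to $B$ in a fixed embedding of $G$. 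Verifying (a)--(c) together with the weight and cost bookkeeping is where the real work lies. For DCPGGD there is the additional requirement that $G'$ be connected and hence contained in a single component of $G$; I handle this by noting that all other components are then forced into $U$, so I try each component as the surviving one (a linear number of choices), subtract the forced deletion costs, and normalise within the chosen component exactly as above, answering \textbf{yes} already if some fully satisfied component can serve as the survivor within budget. Every step runs in polynomial time and strictly decreases either the number of vertices or the number of components, so the procedure terminates and returns either a solved answer or an equivalent normalized instance.
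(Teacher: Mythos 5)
Your handling of condition (i) is exactly the paper's \textbf{vertex deletion rule}, with a correct (and more explicit) safety argument, and your dichotomy observation --- that every component of $G[S]$ is kept or deleted as a whole by any efficient solution --- is correct and is also what underlies the paper's argument. The gap is in condition (ii). You reduce it to constructing a ``boundary gadget'' satisfying properties (a)--(c), but you never construct the gadget, and you explicitly defer ``the real work'' of verifying it. This is not a routine verification: the weight bookkeeping you promise is in general impossible as stated, because $w$ maps into $\mathbb{N}$ (positive integers), so a gadget with more vertices than $w(K)$ cannot have its vertex weights sum to $w(K)$, yet your construction forces at least $\max_{w\in B} m_w$ attachment vertices plus whatever is needed for connectivity, with no bound relating this to $w(K)$. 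Likewise, ``drawing the gadget inside a face incident to $B$'' does not address whether the required bipartite attachment pattern (each $w\in B$ joined to $m_w$ distinct gadget vertices, in the cyclic order imposed by the embedding) is planar-realizable by a \emph{small connected} gadget. The paper avoids all of this with a local rule: contract a single edge $uv$ with $u,v\in S$ and $N_G(v)\subseteq S$, set $\delta$ of the merged vertex $z$ to its new degree, give it weight $w(u)+w(v)$ and cost $c(u)+c(v)$, and make all edges at $z$ undeletable by assigning them weight $k_e+1$. Because the contracted endpoint $v$ has no neighbour in $\overline{S}$, no multiplicity towards $\overline{S}$ is ever lost, degrees in $\overline{S}$ are untouched, planarity is automatic, and the positive-integer weight issue disappears since two vertices become one. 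Iterating this rule is precisely the construction your desiderata (a)--(c) are asking for; identifying the properties is not the same as exhibiting a gadget that has them.

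A second, independent problem is your treatment of \textsc{DCPGGD}. Trying each component of $G$ as the unique survivor produces up to $|V(G)|$ candidate instances and answers yes if any of them is a yes-instance; that is a Turing-style reduction, whereas the lemma requires a polynomial-time algorithm that either solves the instance or outputs \emph{one} equivalent normalized instance (this is what the kernelization in Theorem~\ref{thm:ker-2} consumes). The paper does not attempt to make the instance connected at this stage at all: normalization only enforces (i) and (ii), the component dichotomy is used merely to justify the connected variant of the \textbf{isolates removal rule}, and disconnectedness is dealt with later by the \textbf{stopping rule} inside the kernelization itself. So your branching is both not permitted by the statement and unnecessary.
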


\begin{proof}
Let $(G,k_v,k_e,C,\delta,w,c)$ be an instance of \textsc{DPGGD}.
To simplify notation, we keep the same notation for the functions $\delta,w,c$ if we delete vertices or edges and do not modify the values of the functions for the remaining elements if this does not create confusion.

We say that a reduction rule is \emph{safe} if by applying the rule we either solve the problem or obtain an equivalent instance.
It is straightforward to see that the following reduction rules are safe.

\noindent
\begin{quote}
{\bf Yes-instance rule.}
If $S=V(G)$ then $(\emptyset,\emptyset)$ is a solution, return a yes-answer and stop.
\end{quote}

\noindent
\begin{quote}
{\bf Vertex deletion rule.}
If~$G$ has a vertex~$v$ with $d_G(v)<\delta(v)$ or
$d_G(v)> \delta(v)+k_v+k_e$, then
delete~$v$ and set $k_v=k_v-w(v)$, $C=C-c(v)$.
If $k_v<0$ or $C<0$,
then stop and return a no-answer.
\end{quote}

Observe that by the exhaustive application of the {\bf vertex deletion rule} and applying the {\bf yes-instance rule} whenever possible, we either solve the problem or we obtain an instance which satisfies (i) of the definition of normalized instances, but where $S\neq V(G)$. Notice that, in particular, the {\bf yes-instance rule} is applied if the set of vertices becomes empty.
To ensure (ii), we apply the following two rules.

\noindent
\begin{quote}
{\bf Contraction rule.} If~$G$ has two adjacent vertices $u,v\in S=\{x\in V(G)\mid d_G(x)=\delta(x)\}$ such that $N_G(v)\subseteq S$,
then we construct the instance $(G',k_v,k_e,C,\delta',w',c')$ as follows.
\begin{itemize}
\item Contract~$uv$. Denote the obtained graph $G'=G/uv$ and let~$z$ be the vertex obtained from~$u$ and~$v$.
\item Set $\delta'(z)=d_{G'}(z)$ and set $\delta'(x)=d_{G'}(x)$ for any $x\in S\setminus\{u,v\}$.
For each $x\in \overline{S}$, set $\delta'(x)=\delta(x)$.
\item Set $w'(z)=w(u)+w(v)$ and $c'(z)=c(u)+c(v)$. For $x\in V(G)\setminus\{u,v\}$, set
$w'(x)=w(x)$ and $c'(x)=c(x)$.
\item For each $xz\in E(G')$, set $w'(xz)=k_e+1$ and $c'(xz)=0$. For all other edges $xy\in E(G')$, set $w'(xy)=w(xy)$ and $c'(xy)=c(xy)$.
\end{itemize}
\end{quote}

Let $(U,D)$ be an efficient solution for $(G,k_v,k_e,C,\delta,w,c)$. By Observation~\ref{obs:norm-2},~$D$ has no edges incident to~$u$ or~$v$. Also either $u,v\in U$ or $u,v\notin U$, because~$u$ and~$v$ are adjacent and $d_G(u)=\delta(u)$ and $d_G(v)=\delta(v)$. Let $U'=(U\setminus\{u,v\})\cup \{z\}$ if $u,v\in U$ and $U'=U$ otherwise. We have that $(U',D)$ is a solution for $(G',k_v,k_e,C,\delta',w',c')$. If $(U',D')$ is an efficient solution for $(G',k_v,k_e,C,\delta',w',c')$, then~$D'$ has no edges incident to~$z$ by Observation~\ref{obs:norm-2}. If $z\in U'$, let $U=(U'\setminus\{z\})\cup \{u,v\}$ and $U=U'$ otherwise. We obtain that $(U,D)$ is a solution for the original instance.

We exhaustively apply the above rule. Assume that it cannot be applied for $(G,k_v,k_e,C,\delta,w,c)$.
Then we have that this instance satisfies~(i) and the following holds:
for any
$v\in S\neq V(G)$,
either~$v$ is adjacent to a vertex in~$\overline{S}$ or~$v$ is an isolated vertex. It remains to deal with isolated vertices.

\noindent
\begin{quote}
{\bf Isolates removal rule.} If~$G$ has an isolated vertex~$v$, then delete~$v$.
\end{quote}

To see that above rule is safe, notice that, because the considered instance satisfies (i), it follows that $\delta(v) \leq d_G(v) = 0$, so $v\in S$.
Clearly, by the exhaustive application of the {\bf isolates removal rule}, we either solve the problem or obtain an instance that satisfies (i) and (ii).\smallskip

Now consider an instance $(G,k_v,k_e,C,\delta,w,c)$ of \textsc{DCPGGD}.

We replace the {\bf yes-instance rule} by the following variant.

\noindent
\begin{quote}
{\bf Yes-instance rule (connected).}
If $S=V(G)$ and~$G$ is connected, then $(\emptyset,\emptyset)$ is a solution, return a yes-answer and stop.
\end{quote}

It is straightforward to verify that the {\bf vertex deletion rule} and the {\bf contraction rule} are safe for this problem. By applying these rules and by the application of the connected variant of the
{\bf yes-instance rule} whenever possible,
we either solve the problem or obtain an equivalent instance that satisfies~(i) and has the property that
for any $v\in S$, either~$v$ is adjacent to a vertex in~$\overline{S}$ or~$v$ is an isolated vertex.
Suppose that $(G,k_v,k_e,C,\delta,w,c)$ satisfies these properties.
Observe that if~$H$ is a component of~$G$, then for any solution $(U,D)$, either $V(H)\subseteq U$ or $V(G)\setminus V(H)\subseteq U$. Therefore, it is safe to apply the following variant of the {\bf isolates removal rule}.

\noindent
\begin{quote}
{\bf Isolates removal rule (connected).} If~$G$ has an isolated vertex~$v$, then if $w(V(G)\setminus \{v\})\leq k_v$ and $c(V(G)\setminus \{v\})\leq C$, then $(V(G)\setminus\{v\},\emptyset)$ is a solution, return a yes-answer and stop. Otherwise, if $w(V(G)\setminus \{v\})>k_v$ or $c(V(G)\setminus \{v\})> C$, delete~$v$ and set $k_v=k_v-w(v)$ and $C=C-c(v)$; if $k_v<0$ or $C<0$, then stop and return a no-answer.
\end{quote}

It is easy to see that if the input graph was planar then the graph formed after
applying the rules above will also be planar.\qed
\end{proof}

\begin{lemma}\label{lem:ds}
If $(G,k_v,k_e,C,\delta,w,c)$ is a normalized yes-instance of \textsc{DPGGD} (\textsc{DCPGGD} respectively) then~$G$ has a $2$-dominating set of size at most $k_v+2k_e$.
\end{lemma}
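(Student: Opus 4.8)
The plan is to show that in a normalized yes-instance, the set $\overline S = V(G)\setminus S$ of vertices whose current degree differs from the target degree, together with the endpoints of the edges actually deleted, forms a small set that $2$-dominates the whole graph. Let $(U,D)$ be an efficient solution, and consider the graph $G'=G-U-D$. The key structural fact from the definition of a normalized instance (condition (ii)) is that every vertex of $S$ is adjacent to some vertex of $\overline S$; thus $\overline S$ is a dominating set (a $1$-dominating set) of~$G$. The remaining work is to bound $|\overline S|$ in terms of $k_v$ and $k_e$, and then to identify a set of size at most $k_v+2k_e$ that $2$-dominates $G$.

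\textbf{Bounding the relevant vertices.} First I would observe that every vertex of $\overline S$ must either be deleted (lie in $U$) or be incident to a deleted edge (have an endpoint touched by $D$): if $v\in\overline S$ survives in $G'$ and is incident to no edge of $D$, then $d_{G'}(v)=d_G(v)\neq\delta(v)$, contradicting that $(U,D)$ is a solution. So $\overline S \subseteq U \cup V(D)$, where $V(D)$ denotes the set of endpoints of edges in~$D$. The natural candidate dominating set is $X = U \cup V(D)$. Since $w\colon V\cup E\to\mathbb N$ takes values at least~$1$, the weight constraints give $|U|\le w(U)\le k_v$ and $|D|\le w(D)\le k_e$, and each edge contributes at most two endpoints, so $|V(D)|\le 2k_e$. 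Hence $|X|\le |U|+|V(D)|\le k_v+2k_e$, giving the desired size bound.

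\textbf{From domination of $\overline S$ to $2$-domination by $X$.} It remains to check that $X$ is a $2$-dominating set of~$G$, i.e.\ every vertex of $G$ lies within distance~$2$ of~$X$. Any vertex in $\overline S$ lies in $X$ (distance~$0$), since $\overline S\subseteq U\cup V(D)\subseteq X$. For a vertex $u\in S$, normalization condition (ii) guarantees a neighbour $v\in\overline S\subseteq X$, so $u$ is at distance at most~$1$ from~$X$. Thus in fact $X$ is already a $1$-dominating set, which is a fortiori a $2$-dominating set; the distance bound of~$2$ is comfortably satisfied. This yields a $2$-dominating set of size at most $k_v+2k_e$, as required, and the same argument works verbatim for \textsc{DCPGGD} since it only uses the degree-correctness of the solution and condition~(ii), both of which hold identically there.

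\textbf{Where the care is needed.} The argument is short, and the main thing to get right is not a deep obstacle but the bookkeeping: one must ensure $X$ has the claimed size \emph{and} the claimed domination property simultaneously, and the cleanest route is to invoke an \emph{efficient} solution (Observation~\ref{obs:norm-1}) so that $D$ has no edges incident to $U$, which keeps the two contributions $U$ and $V(D)$ cleanly separated and makes the bound $|V(D)|\le 2k_e$ transparent. The only subtlety is confirming that $\overline S$ being nonempty is not required for the bound—if $S=V(G)$ the instance is already solved by the normalization procedure, so we may assume $\overline S\neq\emptyset$, and every vertex is within distance~$1$ of~$X$ regardless.
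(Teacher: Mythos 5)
Your overall plan matches the paper's (take $W=U\cup V(D)$ for a solution $(U,D)$, bound $|W|$ by $k_v+2k_e$ via the weights being positive integers, and use normalization condition (ii) to reach the vertices of $S$), but the central step of your domination argument is wrong. You claim that $\overline S\subseteq U\cup V(D)$, justified by ``if $v\in\overline S$ survives in $G'$ and is incident to no edge of $D$, then $d_{G'}(v)=d_G(v)$.'' This overlooks the third way a vertex's degree can drop: a neighbour of $v$ may lie in $U$. A vertex $v$ with $d_G(v)\neq\delta(v)$ can perfectly well have $v\notin U$, $E_G(v)\cap D=\emptyset$, and yet $d_{G'}(v)=\delta(v)$ because enough of its neighbours were deleted. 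So $\overline S$ is \emph{not} contained in $X=U\cup V(D)$; the correct statement (and the one the paper proves) is only that every $v\in\overline S$ is in $U$, or adjacent to a vertex of $U$, or incident to an edge of $D$, i.e.\ $\overline S\subseteq N_G[W]$. Consequently your conclusion that $X$ is already a $1$-dominating set is also false in general: vertices of $S$ are adjacent to $\overline S$, which sits at distance up to $1$ from $W$, so they are only guaranteed to be within distance $2$ of $W$. This is precisely why the lemma asserts a $2$-dominating set and not a $1$-dominating set.

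The lemma itself survives once this is repaired: $\overline S\subseteq N_G[W]$ and, by condition (ii) of normalization, each $v\in S$ has a neighbour in $\overline S$, hence $S\subseteq N_G^2[W]$, so $W$ is a $2$-dominating set of size at most $k_v+2k_e$. Two smaller remarks: the size bound $|U\cup V(D)|\le |U|+2|D|\le k_v+2k_e$ does not actually need the solution to be efficient (it holds for any solution with positive integer weights), so invoking Observation~\ref{obs:norm-1} is harmless but not the load-bearing part you suggest; and your closing aside that $\overline S\neq\emptyset$ may be assumed is fine but likewise not needed, since the argument covers $S$ and $\overline S$ separately in all cases.
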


\begin{proof}
We prove the lemma for \textsc{DPGGD}; the proof for \textsc{DCPGGD} is the same. Let $(G,k_v,k_e,C,\delta,w,c)$ be a normalized yes-instance of the problem. Let $(U,D)$ be a solution and $W=U\cup V(D)$. Clearly, $|W|\leq k_v+2k_e$, because the weights are positive integers. We show that~$W$ is a $2$-dominating set of~$G$.

Let $S=\{v\in V(G)\mid d_G(v)=\delta(v)\}$ and $\overline{S}=V(G)\setminus S$. For any vertex $v\in \overline{S}$, either $v\in U$ or~$v$ is adjacent to a vertex of~$U$ or~$v$ is incident to an edge of~$D$. Hence, $\overline{S}\subseteq N_G[W]$. Let $v\in S$. Because the considered instance is normalized,~$v$ is adjacent to a vertex $u\in \overline{S}$. It implies, that $S\subseteq N_G^2[W]$.
\qed
\end{proof}

The following is a direct consequence of Lemmas~\ref{lem:protr} and~\ref{lem:ds}.

\begin{lemma} \label{lem:ds_pd}
There is a fixed constant~$\alpha$ such that,
if $(G,k_v,k_e,C,\delta,w,c)$ is a normalized yes-instance of \textsc{DPGGD} (\textsc{DCPGGD} respectively), then~$G$ has an $(\alpha (k_v+2k_e),\alpha)$-protrusion decomposition. Moreover, if there is such a decomposition, one can be constructed 
in polynomial time.
\end{lemma}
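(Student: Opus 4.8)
The plan is to observe that Lemma~\ref{lem:ds_pd} is essentially a direct composition of the two preceding results, so the proof amounts to feeding the conclusion of Lemma~\ref{lem:ds} into the hypothesis of Lemma~\ref{lem:protr}. First I would note that Lemma~\ref{lem:ds} guarantees that any normalized yes-instance of \textsc{DPGGD} (or \textsc{DCPGGD}) has a $2$-dominating set of size at most $k_v+2k_e$. This is exactly an $r$-dominating set of size at most $k$ in the language of Lemma~\ref{lem:protr}, with the specific values $r=2$ and $k=k_v+2k_e$.

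Next I would invoke Lemma~\ref{lem:protr} with these parameters. Since~$G$ is planar (the problems are defined on planar input graphs, and Lemma~\ref{lem:normalization} preserves planarity), the hypotheses of Lemma~\ref{lem:protr} are satisfied. The lemma then yields an $(O(kr),O(r))$-protrusion decomposition of~$G$. Substituting $r=2$ and $k=k_v+2k_e$, the first parameter becomes $O(2(k_v+2k_e))=O(k_v+2k_e)$ and the second becomes $O(2)=O(1)$. The purpose of the constant~$\alpha$ in the statement is precisely to absorb the hidden constants in these two $O(\cdot)$ expressions: choosing~$\alpha$ large enough (a fixed constant independent of the instance, depending only on the universal constants produced by Lemma~\ref{lem:protr}) gives a bound of the form $(\alpha(k_v+2k_e),\alpha)$. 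I would remark that a single~$\alpha$ can serve both coordinates by taking the larger of the two constants.

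Finally, for the algorithmic claim, I would appeal directly to the ``can be constructed in polynomial time'' clause of Lemma~\ref{lem:protr}: once a $2$-dominating set is available, the protrusion decomposition is produced in polynomial time. Here I would be careful to note that we do not need to compute a minimum $2$-dominating set from scratch; rather, the construction underlying Lemma~\ref{lem:protr} is what runs in polynomial time, and the phrasing ``if there is such a decomposition, one can be constructed in polynomial time'' is exactly what Lemma~\ref{lem:protr} delivers.

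The only subtlety, and the step I would flag as the place to be precise rather than the main difficulty, is the bookkeeping of constants: I must make clear that~$\alpha$ is \emph{fixed} and does not depend on $k_v$, $k_e$, or~$G$, since it arises solely from the absolute constants hidden in the $O$-notation of Lemma~\ref{lem:protr}. There is no genuine obstacle here—the result is a corollary—so the proof is short, and the care lies entirely in translating the parameter $k=k_v+2k_e$ correctly and in uniformizing the two big-$O$ constants into the single symbol~$\alpha$.
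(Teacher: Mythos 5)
Your proposal is correct and matches the paper's own treatment exactly: the paper states Lemma~\ref{lem:ds_pd} as a direct consequence of Lemmas~\ref{lem:protr} and~\ref{lem:ds}, obtained by plugging $r=2$ and $k=k_v+2k_e$ into the $(O(kr),O(r))$-protrusion decomposition and absorbing the constants into a single fixed $\alpha$. Nothing further is needed.
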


The next lemma states that, for both \textsc{DPGGD} and
\textsc{DCPGGD}, an optimal solution can be found in polynomial time
on graphs of bounded treewidth.
The proof is based on the standard techniques for dynamic programming over tree decompositions.

\begin{lemma}\label{lem:tw}
\textsc{DPGGD} (\textsc{DCPGGD} respectively) can be solved, and an efficient solution $(U,D)$ of minimum cost can be obtained in $(k_v+k_e)^{O(q)}\cdot \poly(n)$ time (in $(q(k_v+k_e))^{O(q)}\cdot \poly(n)$ time respectively) for
instances
$(G,k_v,k_e,C,\delta,w,c)$ where~$G$ is an $n$-vertex graph of treewidth at most~$q$
and $\delta(v)\leq d_G(v)\leq \delta(v)+k_v+k_e$ for $v\in V(G)$.
\end{lemma}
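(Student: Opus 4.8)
The plan is to solve both problems by dynamic programming over a tree decomposition. Using a standard treewidth algorithm (in our applications $q$ is a fixed constant, so this is polynomial) together with Kloks' result cited above, we first obtain a nice tree decomposition $(\mathcal{X},T)$ of $G$ of width $O(q)$ with $|V(T)|=O(n)$, and we further augment it routinely so that every edge of $G$ is introduced at exactly one node. For a node $i$, let $G_i$ be the subgraph consisting of all vertices and edges introduced in the subtree rooted at $i$. By Observation~\ref{obs:norm-1} it suffices to look for an \emph{efficient} solution $(U,D)$ of minimum cost; recall that in an efficient solution $D$ has no edge incident to $U$, so a kept vertex $v\notin U$ loses precisely the incident edges that lie in $D$ or that lead to a deleted vertex, and its surviving degree must equal $\delta(v)$.

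The crucial point for the parameter dependence is what we store. Since $\delta(v)\le d_G(v)\le\delta(v)+k_v+k_e$, the number of incident edges of a kept vertex that can ever be removed is at most $d_G(v)-\delta(v)\le k_v+k_e$. Hence, instead of recording the current degree of a bag vertex (which could be as large as $d$), we record only its \emph{deficiency}, the number of its incident edges already removed in $G_i$, an integer in $\{0,\dots,k_v+k_e\}$. Concretely, a state at node $i$ assigns to every $v\in X_i$ either the label ``deleted'' (meaning $v\in U$) or a value $a(v)\in\{0,\dots,k_v+k_e\}$. For each state, and for each pair $(w_U,w_D)$ with $w_U\le k_v$, $w_D\le k_e$ recording the weight spent so far on deleted vertices and on deleted edges, the table stores the minimum cost $c$ of a partial solution on $G_i$ realising it (with backpointers for reconstruction). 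The number of states per bag is $(k_v+k_e+2)^{|X_i|}\le(k_v+k_e+2)^{q+1}$, and the $(k_v+1)(k_e+1)$ budget coordinates are polynomial in $k_v+k_e$, so each table has size $(k_v+k_e)^{O(q)}$.

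The transitions are standard. At an introduce-vertex node we branch on deleting the new vertex (charging $w(v)$ to $w_U$ and $c(v)$ to the cost) or keeping it (initialising its deficiency to $0$). At an introduce-edge node for $uv$ we account for the edge once and for all: if an endpoint is deleted the edge is simply lost and we only increment the deficiency of a kept endpoint; if both endpoints are kept we branch on keeping the edge (no change) or placing it in $D$ (charging $w(uv)$ to $w_D$ and $c(uv)$ to the cost and incrementing both $a(u),a(v)$), discarding any branch in which some deficiency would exceed $d_G(v)-\delta(v)$. At a forget node we may drop $v$ only if its bookkeeping is complete, i.e.\ $v$ is deleted or $a(v)=d_G(v)-\delta(v)$; this is the unique place where the global degree constraint is enforced. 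At a join node we combine two states with identical labelling by adding their deficiencies and budget coordinates and summing costs, which involves no double counting since each edge lies in a single subtree. At the root, whose bag is empty, feasible leaf-to-root runs correspond exactly to efficient solutions, and we output the minimum cost (rejecting if it exceeds $C$ or if no run with $w_U\le k_v$, $w_D\le k_e$ survives). As each table has size $(k_v+k_e)^{O(q)}$ and the join transitions cost time quadratic in the table size, the total running time is $(k_v+k_e)^{O(q)}\cdot\poly(n)$.

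For \textsc{DCPGGD} we must additionally force $G'$ to be connected, and this is the main obstacle. We enrich each state by a partition of the kept vertices of $X_i$ into blocks, where two kept bag-vertices share a block exactly when they are joined by a path in $G_i-U-D$. Transitions update this partition in the usual way (merging blocks when a kept edge is introduced, amalgamating the two partitions at a join node), and a forget node may remove the last bag-representative of a block only when no other block is still alive, so that at most one connected component is ever completed; at the root we accept a run only if it has produced a single component containing all kept vertices (or the empty graph, when $U=V(G)$). Since the number of partitions of a set of size at most $q+1$ is at most $(q+1)^{q+1}=q^{O(q)}$, the tables grow to size $(q(k_v+k_e))^{O(q)}$, giving the claimed running time $(q(k_v+k_e))^{O(q)}\cdot\poly(n)$. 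The delicate part is precisely this connectivity bookkeeping: one must declare a component ``finished'' only when it can never again meet a vertex higher in the decomposition, and permit exactly one finished component; getting these conditions right is where the argument demands the most care, whereas the degree accounting is made routine by the deficiency trick.
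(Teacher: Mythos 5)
Your proposal follows essentially the same route as the paper: a dynamic program over a nice tree decomposition whose states record, for each bag vertex, either deletion or a deficiency in $\{0,\dots,k_v+k_e\}$ (bounded thanks to the hypothesis $\delta(v)\le d_G(v)\le\delta(v)+k_v+k_e$), together with the two weight budgets and a minimum-cost value, with the degree constraint enforced at forget nodes and, for \textsc{DCPGGD}, a partition of the bag tracking connectivity; the table sizes and running times match the paper's. The only slip is at join nodes: deleted vertices of the shared bag belong to both children's partial solutions, so simply summing the budget coordinates and the costs double-counts their weight and cost --- the paper corrects for this by requiring $h_v'+h_v''-w(X)\leq h_v$ (and similarly for the edge budget) and by evaluating the cost on the set union of the two partial solutions rather than as a sum, and your version needs the same adjustment.
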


\begin{proof}
We use more or less standard approach for construction of dynamic programming algorithms for graphs of bounded treewidth.

First, we consider \textsc{DPGGD}.
Let $(G,k_v,k_e,C,\delta,w,c)$ be an instance of the problem where $\tw(G)\leq q$
and $\delta(v)\leq d_G(v)\leq \delta(v)+k_v+k_e$ for all $v\in V(G)$.
We first of all assume that a nice tree decomposition $(\mathcal{X},T)$ of~$G$ with width $t=O(q)$ is given.
To simplify later arguments, we may assume $t \geq 2$.
For this, we may use the algorithm of~\cite{BodlaenderDDFLP13} to obtain an decomposition whose width is at most five times
the optimal in $2^{O(q)}\cdot n$ steps and then convert it to a nice tree decomposition using
the aforementioned results of Kloks~\cite{Kloks94}.

Let~$r$ denote the root of~$T$. For any node $i\in V(T)$, let~$T_i$ denote the subtree of~$T$ induced by~$i$ and its descendants and let $G_i=G[\bigcup_{j\in V(T_i)}X_j]$.
We apply a dynamic programming algorithm over $(\mathcal{X},T)$.

First, we describe the tables that are constructed for the nodes of~$T$. Let $i\in V(T)$. We define ${\bf table}_i$
as a partial function whose inputs are quintuples $(X,Y,\gamma,h_v,h_e)$ where
\begin{itemize}
\item $X\subseteq X_i$,
\item $Y\subseteq E(G[X_i])$,
\item $\gamma: X_i\setminus X\rightarrow \{0,\ldots,k_v+k_e\}$,
\item $h_v\leq k_v$ and
\item $h_e\leq k_e$.
\end{itemize}
The value of ${\bf table}_i$ is a minimum cost pair $(U,D)\in 2^{V(G_i)}\times 2^{E(G_i)}$
with the following properties:
\begin{enumerate}[(i)]
\item for any $v\in U$ and any $e\in D$, $v$ and~$e$ are not incident,
\item $w(U)\leq h_v$ and $w(D)\leq h_e$,
\item $U\cap X_i=X$ and $D\cap E(G[X_i])=Y$,
\item for every $v\in X_i\setminus X$, the number of neighbours of~$v$ in~$G_i$ that belong in $U\setminus X_i$
plus the number of edges of $D\setminus E(G[X_i])$ that are incident to~$v$ is exactly~$\gamma(v)$,
\item for each $v\in V(G_i)\setminus X_i$, $d_{G_i'}(v)=\delta(v)$ where $G_i'=G_i-U-D$,
\end{enumerate}
and, if no such pair $(U,D)$ exists, then ${\bf table}_i(X,Y,d,h_v,h_e)$ is void.

Recall that $X_r=\emptyset$.
Observe that $(G,k_v,k_e,C,\delta,w,c)$ is a yes-instance if and only if
${\bf table}_r(\emptyset,\emptyset,\varnothing,k_v,k_e)$ is non-void (where $\varnothing: \emptyset \rightarrow \{0,\ldots,k_v+k_e\}$). Moreover, in such a case, the value of ${\bf table}_r(\emptyset,\emptyset,\varnothing,k_v,k_e)$ is a minimum-cost solution for this instance.

Now we explain how we construct ${\bf table}_i$ for each $i\in V(T)$.
If~$i$ is a {\sl leaf} node, ${\bf table}_i$ is constructed in a straightforward
way because $X_i=\emptyset$.
Indeed, for $0 \leq h_v \leq k_v$ and $0 \leq h_e \leq k_e$ we set ${\bf table}_i(\emptyset,\emptyset,\varnothing,h_v,h_e)=(\emptyset,\emptyset)$ and have ${\bf table}_i$ void in all other cases.
Hence, it remains to give the construction for {\sl introduce}, {\sl forget}, and {\sl join} nodes. Let $i\in V(T)$ be a node of one of these types. Assume inductively that the function ${\bf table}_{i'}$ for every child~$i'$ of~$i$ has already been constructed.

\newcommand{\improves}{\leftarrowtail}

In what follows we write ${\bf table}_i(X,Y,\gamma,h_v,h_e) \improves (U,D)$
to refer to the following procedure: If ${\bf table}_i(X,Y,\gamma,h_v,h_e)$
is undefined, set it to be equal to $(U,D)$. If ${\bf
table}_i(X,Y,\gamma,h_v,h_e) = (\hat{U},\hat{D})$ and $c(\hat{U}\cup
\hat{D})>c(U\cup D)$, change ${\bf table}_i(X,Y,\gamma,h_v,h_e)$ to be equal
to $(U,D)$. Otherwise, do not change ${\bf table}_i(X,Y,\gamma,h_v,h_e)$.

\medskip
\noindent
{\bf Construction for an introduce node.} Let~$i'$ be the child of~$i$ and $X_i=X_{i'}\cup\{v\}$.
Notice that $N_{G_i}(v)\subseteq X_{i'}$. We start with ${\bf table}_i$ empty.
Then, for each pair $h_v,h_e$ where $h_v\leq k_v$ and $h_e\leq k_e$
and each
pair $((X',Y',\gamma',h_v',h_e'),(U',D'))\in{\bf table}_{i'}$ where $h_v'\leq h_v$ and $h_e'\leq h_e$, we do the following:
\begin{itemize}
\item Let $X\leftarrow X'\cup\{v\}$, $Y\leftarrow Y'$, $\gamma\leftarrow \gamma'$, $U\leftarrow U'\cup \{v\}$, and~$D\leftarrow D'$. \\ If $h_v\geq h_v'+w(v)$, then
${\bf table}_i(X,Y,\gamma,h_v,h_e) \improves (U,D)$.
\item Let $X\leftarrow X'$, $U\leftarrow U'$, $\gamma\leftarrow \gamma'\cup\{(v,0)\}$. \\
For every $L\subseteq \{vu\mid vu\in E(G),u\in X_{i'}\setminus X'\}$, let $Y\leftarrow Y'\cup L$, $D\leftarrow D'\cup L$, and
if $h_e\geq h_e'+w(L)$, then
${\bf table}_i(X,Y,\gamma,h_v,h_e) \improves (U,D)$.\end{itemize}

\medskip
\noindent
{\bf Construction for a forget node.} Let~$i'$ be the child of~$i$ and $X_i=X_{i'}\setminus\{v\}$.
We start with ${\bf table}_i$ empty.
For each pair $((X',Y',\gamma',h_v,h_e),(U,D))\in{\bf table}_{i'}$, we do the following.
\begin{itemize}
\item If $v\in X'$ then let $X\leftarrow X'\setminus\{v\}$, $Y\leftarrow Y'$, and define~$\gamma$ by replacing in~$\gamma'$
each pair $(u,\gamma'(u))$ where $uv\in E(G)$ and $u\in X_i\setminus X$ by the pair $(u,\gamma'(u)+1)$.\\
If $\max_{u \in X_i \setminus X} \gamma(u) \leq k_v+k_e$, then
${\bf table}_i(X,Y,\gamma,h_v,h_e) \improves (U,D)$.
\item If $v\notin X'$, then
let $X\leftarrow X'$, $L\leftarrow \{vu\in E(G)\mid u\in X_i\}\cap Y'$,
$Y\leftarrow Y'\setminus L$, and define~$\gamma$ by replacing in $\gamma^-=\gamma'\setminus \{(v,\gamma'(v))\}$
each pair $(u,\gamma'(u))$ where $uv\in L$ by the pair $(u,\gamma'(u)+1)$.\\
If $\delta(v)=d_G(v)-|L|-\gamma'(v)$ and $\max_{u \in X_i \setminus X} \gamma(u)\leq k_v+k_e$, then
${\bf table}_i(X,Y,\gamma,h_v,h_e) \improves (U,D)$.\end{itemize}

\medskip
\noindent
{\bf Construction for a join node.} Let~$i'$ and~$i''$ be the children of~$i$. We start with ${\bf table}_i$ empty.
For each pair $((X,Y,\gamma',h_v',h_e'),(U',D'))\in{\bf table}_{i'}$
and each pair $((X,Y,\gamma'',h_v'',h_e''),(U'',D'')\in{\bf table}_{i''}$
we do the following.

\begin{itemize}
\item
Let $\gamma\leftarrow \gamma'+\gamma''$, $U\leftarrow U'\cup U''$ and $D\leftarrow D'\cup D''$. \\
If $\max_{u \in X_i \setminus X} \gamma(u) \leq k_v+k_e$, then for any two integers $h_v,h_e$ such that
$h_v'+h_v''-w(X)\leq h_v\leq k_v$ and $h_e'+h_e''-w(Y)\leq h_e\leq k_e$,
${\bf table}_i(X,Y,\gamma,h_v,h_e) \improves (U,D)$.\end{itemize}

\medskip
Using standard arguments, it is straightforward to verify the correctness of the algorithm. To evaluate the running time, recall that ${\bf table}_i$ receives a
quintuple $(X,Y,\gamma,h_v,h_e)$ as input. There are at most~$2^{t+1}$ possible choices
for~$X$, $2^{3(t+1)-6}=2^{3t-3}$ choices of~$Y$ (because of the planarity
of~$G$), $(k_v+k_e+1)^{t+1}$ choices of~$\gamma$, $k_v+1$ possible values
of~$h_v$ and $k_e+1$ possible values for~$h_e$. We therefore
have that each ${\bf table}_i$ has
has $(k_v+k_e)^{O(t)}$ entries. This implies that the running time
of the dynamic programming algorithm is $(k_v+k_e)^{O(t)}\cdot n$.\medskip

Now we consider \textsc{DCPGGD}. The difference is that we have to keep track of components of a partial solution as is standard for dynamic programming algorithms for graphs of bounded treewidth with a connectivity condition such as, e.g. the \textsc{Steiner Tree} problem.
Let $(G,k_v,k_e,C,\delta,w,c)$ be an instance of \textsc{DCPGGD} where $\tw(G)\leq t$
and $\delta(v)\leq d_G(v)\leq \delta(v)+k_v+k_e$ for $v\in V(G)$.
Without loss of generality we assume that a nice tree decomposition $(\mathcal{X},T)$ of~$G$ with treewidth at most~$t$ is given and apply a dynamic programming algorithm over $(\mathcal{X},T)$.
Let $i\in V(T)$.

We define ${\bf table}_i^c$
as a partial function whose inputs are quintuples $({\cal P},Y,\gamma,h_v,h_e)$ where
\begin{itemize}
\item $\mathcal{P}=\{P_0,\ldots,P_s\}$ is a partition of~$X_i$,
\item $Y\subseteq E(G[X_i])$,
\item $\gamma: X_i\setminus X\rightarrow \{0,\ldots,k_v+k_e\}$,
\item $h_v\leq k_v$ and
\item $h_e\leq k_e$.
\end{itemize}
The value of ${\bf table}_i^c$ is a minimum cost pair $(U,D)\in 2^{V(G_i)}\times 2^{E(G_i)}$ with the following properties:
\begin{enumerate}[(i)]
\item for any $v\in U$ and any $e\in D$, $v$ and~$e$ are not incident,
\item $w(U)\leq h_v$ and $w(D)\leq h_e$,
\item $U\cap X_i=P_0$ and $D\cap E(G[X_i])=Y$,
\item for every $v\in X_i\setminus X$, the number of neighbours of~$v$ in~$G_i$ that belong in $U\setminus X_i$,
plus the number of edges of $D\setminus E(G[X_i])$ that are incident to~$v$ is exactly~$\gamma(v)$,
\item for each $v\in V(G_i)\setminus X_i$, $d_{G_i'}(v)=\delta(v)$ where $G_i'=G_i-U-D$,
\item if $s=0$, then $G_i'=G_i-U-D$ is connected and if $s\geq 1$, then~$G_i'$ has~$s$ components $H_1,\ldots,H_s$ such that $V(H_i)\cap X_h=P_i$ for $h\in\{1,\ldots,s\}$,
\end{enumerate}
and, if no such pair $(U,D)$ exists, then ${\bf table}_i^c({\cal P},Y,d,h_v,h_e)$ is void.

As in the non-connected case, $(G,k_v,k_e,C,\delta,w,c)$ is a yes-instance
if and only if
${\bf table}_r^c(\emptyset,\emptyset,\varnothing,k_v,k_e)$ is non-void and the value of ${\bf table}_r^c(\emptyset,\emptyset,\varnothing,k_v,k_e)$, if exists, is a minimum-cost solution for this instance.

The partial function ${\bf table}^c_i$ is constructed for every $i\in V(T)$ similarly to the construction of ${\bf table}_i$ for \textsc{DPGGD}. Because there are at most $(t+1)^{t+1}$ partitions~$\mathcal{P}$ of each~$X_i$, we have that each table contains $(t(k_v+k_e))^{O(t)}$ entries. Therefore, the running time of the dynamic programming algorithm is $(t(k_v+\nobreak k_e))^{O(t)} n$.
\qed
\end{proof}

We are now ready to present our two main results, starting with the one for DPGGD.

\begin{theorem}\label{thm:ker-1}
\textsc{DPGGD} has a polynomial kernel when parameterized by $k_v+k_e$.
\end{theorem}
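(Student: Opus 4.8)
My plan is to follow the \emph{protrusion decomposition/replacement} strategy that the preliminary lemmas have been building towards, turning a normalized instance into an equivalent one of size polynomial in $k_v+k_e$. First I would run the algorithm of Lemma~\ref{lem:normalization}: it either solves the instance or returns an equivalent normalized one, so from now on I assume $(G,k_v,k_e,C,\delta,w,c)$ is normalized. Next I would try to build the protrusion decomposition guaranteed by Lemma~\ref{lem:ds_pd}. If the construction fails then, by the contrapositive of Lemma~\ref{lem:ds} (a normalized instance with no $2$-dominating set of size at most $k_v+2k_e$ cannot be a yes-instance), the instance is a no-instance and I output a fixed trivial no-instance. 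Otherwise I obtain $\Pi=\{R_0,R_1^+,\dots,R_p^+\}$ with $\max\{p,|R_0|\}\le\alpha(k_v+2k_e)=O(k_v+k_e)$ and each $R_i^+$ an $\alpha$-protrusion, so $|\partial_G(R_i^+)|\le\alpha$ and $\tw(G[R_i^+])\le\alpha$ for a fixed constant $\alpha$.

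The heart of the argument is to replace each protrusion by a small equivalent gadget. For a protrusion $R_i^+$ with boundary $B_i=\partial_G(R_i^+)$ (so $|B_i|\le\alpha$ and $R_i^+$ meets the rest of $G$ only in $B_i$), I would run the dynamic programming of Lemma~\ref{lem:tw} on $G[R_i^+]$ — which has treewidth at most $\alpha$ — but read off the table at a node exposing $B_i$ rather than only the global answer. This yields the \emph{boundary profile} of the protrusion: for every interface configuration (a choice of which vertices of $B_i$ lie in $U$, and, for each surviving $b\in B_i$, the degree reduction $\rho_b$ that the interior supplies to $b$) and every weight pair $(a,b)$ with $a\le k_v$, $b\le k_e$, the minimum cost of an efficient partial solution inside $R_i^+$ realizing it. The crucial point that keeps this table small is that, in a normalized instance, $\rho_b\le d_G(b)-\delta(b)\le k_v+k_e$; hence the number of configurations is at most $2^{\alpha}(k_v+k_e+2)^{\alpha}=(k_v+k_e)^{O(\alpha)}$, and together with the at most $(k_v+1)(k_e+1)$ weight pairs the whole profile has $(k_v+k_e)^{O(\alpha)}$ entries and is computed in $(k_v+k_e)^{O(\alpha)}\cdot\poly(n)$ time.

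I would then replace the interior of each protrusion by a planar gadget, attached only at $B_i$, that realizes exactly this profile, so that any global solution restricts to an interface-compatible pair on each side and the answer is preserved. The main obstacle is building this gadget with size polynomial in $k_v+k_e$, and here two difficulties arise. The targets, weights and costs can be huge, so the profile is neither separable across the vertices of $B_i$ nor numerically bounded; to control the numbers I would cap every vertex weight exceeding $k_v$ (resp.\ edge weight exceeding $k_e$) to a sentinel ``undeletable'' value, and compress the costs and $C$ by the weight-compression theorem of Frank and Tardos, which preserves the sign of every subset-sum over the at most $k_v+k_e$ deletable elements while shrinking all bit-lengths to $\poly(k_v+k_e)$. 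More seriously, a vertex $b\in B_i$ may send a huge bundle of edges into $R_i^+$, of which any solution deletes at most $\rho_b\le k_v+k_e$; the surviving bundle is forced and cannot be reproduced by a simple gadget of small degree. I would handle this by \emph{absorbing} the forced part: identify the ``untouchable core'' of $R_i^+$ (kept interior vertices whose degrees are already correct and remain so in every relevant solution), delete it, and decrease $\delta(b)$ by the number of surviving edges this removes at $b$, leaving only the $O(k_v+k_e)$ reducible edges at each boundary vertex. Bounding the retained skeleton is exactly where Lemma~\ref{lem:bound-bip} enters: applied to the planar bipartite graph between $B_i$ and the interior, it shows that only $O(|B_i|)=O(1)$ interior vertices have three or more neighbours in $B_i$, so after absorbing the core the active part of each protrusion, and hence its realizing gadget, has size $\poly(k_v+k_e)$.

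Finally, summing up, the reduced graph keeps $R_0$ (of size $O(k_v+k_e)$) together with $p=O(k_v+k_e)$ gadgets, each of size $\poly(k_v+k_e)$, so it has $\poly(k_v+k_e)$ vertices and edges; after the weight and cost compression all numbers have bit-length $\poly(k_v+k_e)$, and every $\delta$-value is bounded by the (polynomial) number of vertices of the reduced graph. Hence the output is an equivalent instance of size polynomial in $k_v+k_e$, i.e.\ a polynomial kernel. I expect the genuinely hard step to be the gadget construction of the third paragraph — proving that an interface-equivalent planar gadget of polynomial size always exists and can be built in polynomial time — since this is the part that is problem-specific and, as the introduction stresses, does not follow from the meta-kernelization machinery.
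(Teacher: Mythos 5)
Your scaffolding (normalization, the $2$-dominating set bound, the protrusion decomposition, and a per-protrusion dynamic program indexed by boundary configurations with degree reductions bounded by $k_v+k_e$) matches the paper, but there is a genuine gap at the step you yourself flag as the hard one: you commit to replacing each protrusion interior by a small planar gadget that \emph{realizes} the boundary profile, and you never establish that such a gadget exists or can be built. This is not a minor omission --- realizing an arbitrary cost-minimizing profile over $(k_v+k_e)^{O(\alpha)}$ interface configurations by a polynomial-size \emph{planar} graph attached only at $B_i$, while also respecting the degree constraints $\delta$ on the gadget's own internal vertices, is precisely the obstruction that keeps this problem outside the meta-kernelization framework. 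Your proposed fixes (Frank--Tardos compression, ``absorbing the untouchable core'') do not resolve it: which interior vertices are ``untouchable'' depends on the global solution, and compressing numbers does nothing to make the profile realizable by a small graph.

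The paper avoids gadget replacement entirely. For each protrusion it enumerates all boundary quintuples ${\bf q}=(h_v,h_e,X,Y,\delta')$, solves each resulting bounded-treewidth instance $I_{\bf q}$ to optimality via Lemma~\ref{lem:tw}, and takes $W_i,L_i$ to be the \emph{union of all these optimal solutions} --- a set of size $(k_v+k_e)^{O(\alpha)}$. An exchange argument (Claim~A) then shows some efficient minimum-cost solution satisfies $U\subseteq W$ and $D\subseteq L$ where $W,L$ aggregate the $W_i,L_i$ with $R_0$. After that, everything outside $W\cup L$ is made undeletable by a weight-adjustment rule, and the graph itself is shrunk \emph{directly}: vertices with already-correct degree outside $W$ are deleted outright (decrementing $\delta$ on their neighbours), edges among the remaining over-degree vertices are removed, twins are merged, and Lemma~\ref{lem:bound-bip} bounds what is left by $O(|W'|^2)$. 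In other words, the per-protrusion DP is used only to \emph{mark candidate solution elements}, not to summarize the protrusion's behaviour for later re-synthesis. If you want to salvage your write-up, replace your third paragraph with this marking-plus-exchange argument and the subsequent reduction rules; the rest of your outline then goes through.
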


\begin{proof}
Let $(G,k_v,k_e,C,\delta,w,c)$ be an instance of \textsc{DPGGD}.
By Lemma~\ref{lem:normalization}, we may assume that this instance is normalized.
By Lemma~\ref{lem:ds}, if $(G,k_v,k_e,C,\delta,w,c)$ is a yes-instance, then~$G$ has a 2-dominating set of size at most $k_v+2k_e$. By
Lemma~\ref{lem:ds_pd}, there is a fixed constant~$\alpha$ such that~$G$ has an
$(\alpha(k_v+2k_e),\alpha)$-protrusion decomposition, and such a decomposition,
if it exists, can be constructed in polynomial time.
To simplify later arguments, we may assume $\alpha \geq 3$.
Clearly, if we fail to
obtain such a decomposition, we return a no-answer and stop. Hence, from now on we
assume that an $(\alpha(k_v+2k_e),\alpha)$-protrusion decomposition
$\Pi=\{R_0,\ldots,R_p\}$ of~$G$ is given.
As before, we keep the same notation $\delta,w,c$ for the restrictions of these functions.
Again, we will introduce new reduction rules. We will keep the notation for~$G$ and for the parameters unchanged where this is well-defined.
We also assume that if we consider sets of vertices or edges associated with the considered instance and delete vertices or edges from the graph, then we also delete these elements from the associated sets.

For each $i\in \{1,\ldots,p\}$, we construct $W_i\subseteq R_i$ and $L_i\subseteq E_G(R_i)$.
To do this, we consider the set~${\cal Q}$ of all possible quintuples ${\bf q}=(h_v,h_e,X,Y,\delta')$ such that
\begin{itemize}
\item $0\leq h_v\leq k_v$ and $0 \leq h_e\leq k_e$,
\item $X\subseteq N_G(R_i)$ and $Y\subseteq E(G[N_G(R_i)\setminus X])$, and
\item We define $F=G[R_i^+]- X-Y$ and require that $\delta'\colon
V(F)\rightarrow \mathbb{N}_0$ is a function such that
$\delta'(v)\leq d_F(v)\leq \delta'(v) +k_v+k_e$
for $v\in N_G(R_i)\setminus X$ and
$\delta'(v)=\delta(v)$ for $v\in R_i$.
\end{itemize}
Observe that there are at most~$2^\alpha$ sets~$X$, at most~$2^{3\alpha-6}$ sets~$Y$,
at most $(k_v+\nobreak 1)(k_e+\nobreak 1)$ pairs $h_v,h_e$, and for each~$X$, there are at most $(k_v+k_e+1)^\alpha$ possibilities for~$\delta'$.
Therefore $|{\cal Q}| \leq 2^\alpha 2^{3\alpha-6} (k_v+1)(k_e+1) (k_v+k_e+1)^\alpha =(k_v+k_e)^{O(\alpha)}$.

For each ${\bf q}=(h_v,h_e,X,Y,\delta')\in{\cal Q}$, we construct an instance
$I_{\bf q}=(F,h_v,h_e,C,\delta',w',c)$ of \textsc{DPGGD}
such that
\begin{itemize}
\item $w'(v)=k_v+1$ for $v\in N_G(R_i)\setminus X$ and $w'(v)=w(v)$ for $v\in R_i$ and
\item $w'(e)=k_e+1$ for $e\in E(G[N_G(R_i)\setminus X])\setminus Y$ and $w'(e)=w(e)$ for all other edges of~$F$.
\end{itemize}
By Lemma~\ref{lem:tw}, we can solve the problem for this instance in
$(k_v+k_e)^{O(\alpha)}$ time.
Let $(U_{\bf q},D_{\bf q})$ denote the obtained solution of minimum cost and set $U_{\bf q}=D_{\bf q}=\emptyset$ if no solution exists for~$I_{\bf q}$.
Let
$$W_i=\bigcup_{{\bf q}\in {\cal Q}} U_{\bf q}\mbox{\  and \ } L_i=\bigcup_{{\bf q}\in {\cal Q}} D_{\bf q}.$$
Because each~$U_{\bf q}$ has at most~$k_v$ vertices and each~$D_{\bf q}$ has at most~$k_e$ edges, we obtain that
$|W_i|\leq |{\cal Q}|k_v \leq (k_v+1)(k_e+1)\cdot 2^\alpha\cdot2^{3\alpha-6}\cdot (k_v+k_e+1)^\alpha\cdot k_v$
and $|L_i|\leq |{\cal Q}|k_e \leq (k_v+1)(k_e+1)\cdot 2^\alpha\cdot2^{3\alpha-6}\cdot (k_v+k_e+1)^\alpha\cdot k_e$. Hence, the size of~$W_i$ and~$L_i$ is $(k_v+k_e)^{O(\alpha)}$.

Let $W=R_0\cup\bigcup_{i\in\{1,\ldots,p\}} W_i$ and $L=E(G[R_0])\cup \bigcup_{i\in \{1,\ldots,p\}} L_i$. Because $\max\{p,|R_0|\}\leq \alpha (k_v+2k_e)$, we have that
$|W|=(k_v+k_e)^{O(\alpha)}$ and $|L|=(k_v+\nobreak k_e)^{O(\alpha)}$. We prove the following claim.

\clm{\label{clm:first} If $(G,k_v,k_e,C,\delta,w,c)$ is a yes-instance of \textsc{DPGGD}, then it has an efficient solution $(U,D)$ of minimum cost such that
$U\subseteq W$ and $D\subseteq L$.}

\medskip
\noindent
We prove Claim~\ref{clm:first} as follows.
Let $(U,D)$ be an efficient solution for $(G,k_v,k_e,C,\delta,w,c)$ of minimum cost such that $s=|U\setminus W|+|D\setminus L|$ is minimum. If $s=0$, then the claim is fulfilled. Suppose, for contradiction, that $s>0$. This means that there is an $i\in\{1,\ldots,p\}$ such that
$(U \cap R_i) \setminus W_i \neq \emptyset$ or $(D \cap E_G(R_i)) \setminus L_i \neq \emptyset$.
Let $X=U\cap N_G(R_i)$, $Y=D\cap E(N_G(R_i))$ and $F=G[R_i^+]-X-Y$. Let $h_v=|U\cap V(F)|$ and $h_e=|D\cap E(F)|$.
For each vertex $v\in N_G(R_i)\setminus X$, let~$d_v$ be the total number of vertices in $U\setminus V(F)$ adjacent to~$v$ plus the number of edges in $D\setminus E(F)$ incident to~$v$.
Let $\delta'(v)=d_F(v)-(d_G(v)-\delta(v)-d_v)$
for $v\in N_G(R_i)\setminus X$ and $\delta'(v)=\delta(v)$ for all other vertices of~$F$.

Clearly, $(F,h_v,h_e,C,\delta',w',c)=I_{\bf q}$ is the instance of \textsc{DPGGD} when ${\bf q}=(h_v,h_e,X,Y,\delta')$
if we set~$w'$ as before.
Let $U'=U\cap V(F)$ and $D'=D\cap E(F)$.
Then $(U',D')$ is a solution for the instance~$I_{\bf q}$ and, therefore~$I_{\bf q}$ is a yes-instance.
In particular, this means that there is a solution $(U'',D'')$ for $I_{\bf q}=(F,h_v,h_e,C,\delta',w',c)$ that was constructed by the aforementioned procedure for the construction of~$W_i$ and~$L_i$.
Clearly, $U''\subseteq W_i\subseteq W$ and $D''\subseteq L_i\subseteq L$.
Because our algorithm for graphs of bounded treewidth finds a solution of minimum cost, it follows that $c(U''\cup D'')\leq c(U'\cup D')$.
It remains to observe that $(\hat{U},\hat{D})$, where $\hat{U}=(U\setminus U')\cup U''$ and $\hat{D}=(D\setminus D')\cup D''$, is a solution for $(G,k_v,k_e,C,\delta,w,c)$ with $c(\hat{U} \cup \hat{D}) \leq c(U \cup D)$, but this contradicts the choice of $(U,D)$ because $|\hat{U}\setminus W|+|\hat{D}\setminus L|<s$.
This completes the proof of Claim~\ref{clm:first}.

\medskip
\noindent
Let $S=\{v\in V(G) \mid d_G(v)=\delta(v)\}\setminus W$ and $T=\{v\in V(G)\mid d_G(v)>\delta(v)\}\setminus W$; because the instance we consider is normalized, these sets form a partition of $V(G)\setminus W$ (note that these sets may be empty).
If $v\in S$, then for any efficient solution $(U,D)$ such that
$U\subseteq W$ and $D\subseteq L$, $v$ is not adjacent to any vertex of~$U$ and not incident to any edge of~$L$.
This implies that it is safe to exhaustively apply the following rule without destroying the statement of Claim~\ref{clm:first}.

\noindent
\begin{quote}
{\bf Set adjustment rule.} If there is a vertex $v\in S$ that is adjacent to a vertex $u\in W$, then set $W=W\setminus\{u\}$ and set $S=S\cup\{u\}$ if $d_G(u)=\delta(u)$ and
set $T=T\cup\{u\}$ if $d_G(u)>\delta(u)$.
If $v \in S$, remove any edge incident to~$v$ from~$L$.
\end{quote}

\noindent
By Claim~\ref{clm:first}, it is safe to modify the weights as follows.

\noindent
\begin{quote}
{\bf Weight adjustment rule.} Set $w(v)=k_v+1$ for $v\in V(G)\setminus W$ and set $w(e)=k_e+1$ for $e\in E(G)\setminus L$.
\end{quote}

\noindent
After the exhaustive application of the {\bf set adjustment rule}, we have that $N_G(S)\subseteq T$.
Now it is safe to remove~$S$.

\noindent
\begin{quote}
{\bf $S$-reduction rule.} If $v\in S$, then remove~$v$ and set $\delta(u)=\delta(u)-1$ for $u\in N_G(v)$.
If $\delta(u)<0$ for some $u\in N_G(v)$, then return a no-answer and stop.
\end{quote}

To show that the above rule is safe, let $G'=G-S$ and let~$\delta'$ be the function obtained from~$\delta$ by the application of the rule. Suppose that $(G,k_v,k_e,C,\delta,w,c)$ is a yes-instance. Then, by Claim~\ref{clm:first}, we have a solution $(U,D)$ such that $U\subseteq W$ and $D\subseteq L$. Because $N_G(S)\subseteq T$, $T\cap W=\emptyset$ and the vertices of~$S$ are not incident to edges of~$L$, it follows that we do not stop and
$(U,D)$ is a solution for $(G',k_v,k_e,C,\delta',w,c)$. Now let $(U,D)$ be a solution for $(G',k_v,k_e,C,\delta',w,c)$. Because of the application of the {\bf weight adjustment rule},
$U\subseteq W$ and $D\subseteq L$. Because $N_G(S)\subseteq T$, $T\cap W=\emptyset$ and the vertices of~$S$ are not incident to edges of~$L$, we have that $(U,D)$ is a solution for $(G,k_v,k_e,C,\delta,w,c)$. This completes the proof that the {\bf $S$-reduction rule} is safe.

Let $W'=W\cup V(L)$ and $T'=T\setminus V(L)$. Clearly,
$|W'|\leq |W|+2|L|= (k_v+k_e)^{O(\alpha)}$.

Using similar arguments to those for the {\bf $S$-reduction rule}, the following rule is also safe.

\noindent
\begin{quote}
{\bf $T'$-reduction rule.} If $uv\in E(G[T'])$, then remove~$uv$ and set $\delta(u)=\delta(u)-1$ and $\delta(v)=\delta(v)-1$. If $\delta(u)<0$ or $\delta(v)< 0$, then return a no-answer and stop.
\end{quote}

After the exhaustive application of the above rule,~$T'$ is an independent set in the obtained graph~$G$. Some of the vertices of this independent set may have the same neighbourhoods. We deal with them using the next rule.

\noindent
\begin{quote}
{\bf Twin reduction rule.} Suppose there are $u,v\in T'$ with $N_G(u)= N_G(v)$. If $\delta(u)=\delta(v)$, then remove~$v$ and set $\delta(x)=\max\{0,\delta(x)-1\}$ for $x\in N_G(u)$.
If $\delta(u) \neq \delta(v)$ then return a no-answer and stop.
\end{quote}

To prove that the above rule is safe, consider a pair of vertices $u,v\in T'$ with $N_G(u)=N_G(v)$ and $\delta(u)=\delta(v)$. Let $G'=G-v$ and let~$\delta'$ denote the function obtained from~$\delta$ by the rule.
Suppose that $(G,k_v,k_e,C,\delta,w,c)$ is a yes-instance. Then we have a solution $(U,D)$ such that $U\subseteq W$ and $D\subseteq L$.
Notice that $T'\cap U=\emptyset$ and the vertices of~$T'$ are not incident to the edges of~$L$.
Note that $u,v\notin U$ and if $x\in N_G(u)$ then $ux,vx\notin D$.
We have that~$U$ contains exactly $d_G(u)-\delta(u)$ vertices that are adjacent to~$u$.
Therefore, $(U,D)$ is a solution for $(G',k_v,k_e,C,\delta',w,c)$. Now assume that $(U,D)$ is a solution for $(G',k_v,k_e,C,\delta',w,c)$. By the same arguments,~$U$ contains exactly $d_{G'}(u)-\delta'(u)$ vertices that are adjacent to~$u$.
Also if $x\in N_G(u)$ and $\delta'(x)=0$, then $x\in U$, because $u\notin U$ and $ux\notin D$.
Because $N_G(u)=N_G(v)$, $\delta(u)=\delta(v)$ and~$T'$ is an independent set,~$U$ contains $d_{G}(u)-\delta(u)$ vertices that are adjacent to~$u$ and
$d_{G}(v)-\delta(v)$ vertices that are adjacent to~$v$.
It follows that $(U,D)$ is a solution for $(G,k_v,k_e,C,\delta,w,c)$.
Now consider the case when $N_G(u)=N_G(v)$ and $\delta(u) \neq \delta(v)$. Suppose, for contradiction that there is a solution $(U,D)$. By the above arguments,~$U$ contains exactly $d_{G}(u)-\delta(u)$ vertices that are adjacent to~$u$ and $d_{G}(v)-\delta(v)$ vertices that are adjacent to~$v$. Since $N_G(u)=N_G(v)$ and $\delta(u) \neq \delta(v)$, this is a contradiction, so there cannot be such a solution.

\medskip
After the exhaustive application of the above rule for any two vertices $u,v\in\nobreak T'$, we have that $N_G(u)\neq N_G(v)$.
Let $T_{0}',T_1',T_2',T_{\geq 3}'$ denote the sets of vertices in~$T'$ that are of degree $0, 1, 2$ and at least~$3$ respectively. Observe that $d_G(v)>\delta(v)\geq 0$ for $v\in T'$.
Therefore, $T_{0}'=\emptyset$ and $T_1',T_2',T_{\geq 3}'$ form a partition of~$T'$ (note that these sets may be empty).
By the {\bf twin reduction rule}
$|T_1'|=|N_G(T_1')|\leq |W'|$ and $|T_2'|\leq \binom{|N_G(T_2')|}{2}\leq \frac{1}{2}|W'|(|W'|-1)$.
By Lemma~\ref{lem:bound-bip}, $|T_{\geq 3}'|\leq 2|N_G(T')|-4\leq 2|W'|-4$ (or $|T_{\geq 3}'|=0$).
We have that $|V(G)|=|W'|+|T'|=|W'|+|T_1'|+|T_2'|+|T_{\geq 3}'|\leq \frac{1}{2}|W'|^2+\frac{7}{2}|W'|$. Since,~$W'$ has $(k_v+k_e)^{O(\alpha)}$ vertices, we obtain that the obtained graph~$G$ has size~$k^{O(1)}$
where $k=k_v+k_e$, i.e. we have a polynomial kernel for \textsc{DPGGD}.

To complete the proof, it remains to observe that the construction of the normalized instance can be done in polynomial time by Lemma~\ref{lem:normalization}, the construction of~$W$ and~$L$ can be done in polynomial time by Lemma~\ref{lem:tw}, and all the subsequent reduction rules can be applied in polynomial time.
\qed
\end{proof}

The proof of our second main result is based on the same approach as the proof of Theorem~\ref{thm:ker-1}, but it is more technically involved because we have to ensure connectivity of the graph obtained by the editing.

\begin{theorem}\label{thm:ker-2}
\textsc{DCPGGD} has a polynomial kernel when parameterized by $k_v+k_e$.
\end{theorem}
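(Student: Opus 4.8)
The plan is to run the kernelization of Theorem~\ref{thm:ker-1} step by step, adding exactly the bookkeeping needed to control connectivity. All the preparatory machinery transfers unchanged: by Lemma~\ref{lem:normalization} we may assume the instance is normalized, by Lemmas~\ref{lem:ds} and~\ref{lem:ds_pd} a yes-instance admits an $(\alpha(k_v+2k_e),\alpha)$-protrusion decomposition $\Pi=\{R_0,\ldots,R_p\}$ computable in polynomial time, and by the connected variant of Lemma~\ref{lem:tw} each protrusion (of treewidth and boundary at most~$\alpha$) can be processed in $(\alpha(k_v+k_e))^{O(\alpha)}$ time, which is polynomial in $k=k_v+k_e$ for the fixed constant~$\alpha$.

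The one genuinely new ingredient is that, when processing a protrusion $R_i^+$, a locally optimal partial solution is no longer determined only by the boundary data $(h_v,h_e,X,Y,\delta')$ of Theorem~\ref{thm:ker-1}: for connectivity it also matters \emph{how} the surviving part of $G[R_i^+]$ links the boundary vertices together. I would therefore enrich the set~${\cal Q}$ of quintuples by a partition~$\rho$ of the surviving boundary vertices $N_G(R_i)\setminus X$, where~$\rho$ prescribes which of these vertices must lie in a common component of the locally edited protrusion. Since $|N_G(R_i)|\leq\alpha$, there are at most $\alpha^\alpha=O(1)$ such partitions, so the number of local instances grows only by a constant factor. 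For each enriched tuple I would run the connected dynamic programme of Lemma~\ref{lem:tw} on $G[R_i^+]$, constrained to realise~$\rho$, to extract a minimum-cost local solution, and collect the resulting vertices and edges into~$W_i$ and~$L_i$. As before, $|W_i|$ and $|L_i|$ stay $(k_v+k_e)^{O(\alpha)}$, and setting $W=R_0\cup\bigcup_i W_i$ and $L=E(G[R_0])\cup\bigcup_i L_i$ keeps $|W|,|L|=(k_v+k_e)^{O(\alpha)}$.

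With this enrichment the analogue of Claim~\ref{clm:first} should go through: given a minimum-cost efficient connected solution $(U,D)$, I would reroute it protrusion by protrusion, replacing its restriction to~$R_i^+$ by a kept local solution that agrees on the boundary interface $(X,Y,\delta')$ \emph{and} induces the same boundary partition~$\rho$. Because the replacement leaves every degree unchanged and has no greater cost it remains a solution, and because it connects exactly the same pairs of boundary vertices through the protrusion while the rest of $G-U-D$ is untouched, the connectivity of the edited graph is preserved. This yields a minimum-cost connected solution with $U\subseteq W$ and $D\subseteq L$.

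The remaining, and hardest, part is adapting the reduction rules, since the elements I want to discard ($S$-vertices, edges inside~$T'$, and twins) are never edited yet may be indispensable for connecting the surviving graph. The \textbf{weight adjustment rule} transfers verbatim, but I cannot simply delete a fixed vertex or edge, since doing so could change whether a \emph{connected} solution exists; each deletion must be replaced by a connectivity-preserving contraction. Concretely, every vertex of~$S$ and every retained vertex of~$T'$ survives in all solutions (being never deleted and having no incident edge in~$D$), so I would suppress long induced paths and twins through degree-two survivors, retaining only a bounded number of \emph{connectors} that record which components of the core $W'=W\cup V(L)$ each survivor links; after making~$T'$ independent and twin-free as in Theorem~\ref{thm:ker-1}, the neighbourhood in~$W'$ of every retained survivor is unchanged, so its connecting role is carried over faithfully. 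Verifying, rule by rule, that none of these contractions ever destroys or spuriously creates a connected solution (in particular handling the degenerate cases where a survivor is forced to be isolated, so that connectivity collapses the whole instance) is the technical heart of the argument and the step I expect to be the main obstacle. Once it is established, Lemma~\ref{lem:bound-bip} bounds $|T_1'|,|T_2'|,|T_{\geq 3}'|$ by polynomials in $|W'|$ exactly as before, giving $|V(G)|\leq\frac12|W'|^2+O(|W'|)$ with $|W'|=(k_v+k_e)^{O(\alpha)}$, hence a polynomial kernel.
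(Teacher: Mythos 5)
Your first half follows the paper's architecture, with one presentational difference in how boundary connectivity is recorded: you enrich each local instance with a partition~$\rho$ of $N_G(R_i)\setminus X$ prescribing the component structure of the edited protrusion, whereas the paper instead guesses a set cover~$\mathcal{P}$ of $N_G(R_i)\setminus X$ describing how the \emph{exterior} $G-U-D-R_i^+$ links the boundary vertices, attaches one undeletable gadget vertex~$z_j$ per block to form a graph $F_{\mathcal{P}}$ of treewidth at most~$2\alpha$, and then runs the connected dynamic programme of Lemma~\ref{lem:tw} as a black box demanding that all of~$F_{\mathcal{P}}$ be connected. Both encodings can be made to work; if you use yours, the constraint ``realise~$\rho$'' must also forbid components of the edited protrusion that contain no boundary vertex at all, since such a component would be disconnected from the rest of $G-U-D$ and this is not implied by prescribing the partition of the boundary alone.

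The genuine gap is the second half. You correctly identify that the deletion-based rules of Theorem~\ref{thm:ker-1} cannot be reused because the discarded vertices, though never edited, may be indispensable connectors, and you propose to ``suppress paths and twins, retaining a bounded number of connectors'' --- but you then explicitly defer the verification, and that verification is where most of the work in the paper lies. Concretely, the paper needs: a vertex deletion rule forcing $N_G(v)\cap W$ to be deleted when that is the only way to fix the degree of~$v$; two distinct $S$-contraction rules (one making~$S$ independent, one reducing every $S$-vertex to degree~$2$ with both neighbours in $V(L)$, introducing subdivision vertices to avoid parallel edges); a stopping rule handling components of $\overline{W}$; and separate $T'$-deletion and $T'$-contraction rules depending on whether two ``twins'' are isolated in $G[T']$ or lie in a common component of it --- in the latter case one must contract rather than delete, and safety rests on the fact that~$u$ and~$v$ necessarily retain a common undeleted neighbour in~$W'$. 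None of these rules or their safety arguments appear in your sketch. A symptom that the counting has not been carried through is your final bound $|V(G)|\leq\frac12|W'|^2+O(|W'|)$: in the connected setting $T_2$ can no longer be bounded by $\binom{|W'|}{2}$, because distinct degree-$2$ survivors with the same neighbourhood in~$W'$ may legitimately coexist in different components of $G[T']$; the paper needs an additional application of Lemma~\ref{lem:bound-bip} to bound the number of such components and arrives at $|V(G)|=O(|W'|^3)$. The statement still yields a polynomial kernel, but the missing rule-by-rule analysis is the technical heart of the theorem, not a routine adaptation of Theorem~\ref{thm:ker-1}.
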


\begin{proof}
Let $(G,k_v,k_e,C,\delta,w,c)$ be an instance of \textsc{DCPGGD}.
By Lemma~\ref{lem:normalization}, we may assume that this instance is normalized.
By Lemma~\ref{lem:ds}, if $(G,k_v,k_e,C,\delta,w,c)$ is a yes-instance, then~$G$ has a 2-dominating set of size at most $k_v+2k_e$. By
Lemma~\ref{lem:ds_pd}, there is a fixed constant~$\alpha$ such that~$G$ has an
$(\alpha(k_v+2k_e),\alpha)$-protrusion decomposition, and such a decomposition,
if it exists, can be constructed in polynomial time.
To simplify later arguments, we may assume $\alpha \geq 3$.
Clearly, if we fail to
obtain such a decomposition, we return a no-answer and stop. Hence, from now on we
assume that an $(\alpha(k_v+2k_e),\alpha)$-protrusion decomposition
$\Pi=\{R_0,\ldots,R_p\}$ of~$G$ is given.
As before, we keep the same notation $\delta,w,c$ for the restrictions of these functions.
Again, we will introduce new reduction rules. We will keep the notation for~$G$ and for the parameters unchanged where this is well-defined.
We also assume that if we consider sets of vertices or edges associated with the considered instance and delete vertices or edges from the graph, then we also delete these elements from the associated sets.

For each $i\in \{1,\ldots,p\}$, we construct $W_i\subseteq R_i$ and $L_i\subseteq E_G(R_i)$.
To do this, we consider the set~${\cal Q}$ of all possible sextuples ${\bf q}=(h_v,h_e,X,Y,{\cal P},\delta')$ such that
\begin{itemize}
\item $0\leq h_v\leq k_v$ and $0 \leq h_e\leq k_e$,
\item $X\subseteq N_G(R_i)$ and $Y\subseteq E(G[N_G(R_i)\setminus X])$,
\item ${\cal P}=\{P_1,\ldots,P_s\}$ is a set covering of $N_G(R_i)\setminus X$, with $s \leq |N_G(R_i)\setminus X|$,
\item We define $F=G[R_i^+]- X-Y$ and require that $\delta'\colon
V(F)\rightarrow \mathbb{N}_0$ is a function such that
$\delta'(v)\leq d_F(v)\leq \delta'(v)+k_v+k_e$
for $v\in N_G(R_i)\setminus X$ and
$\delta'(v)=\delta(v)$ for $v\in R_i$.
\end{itemize}

Observe that there are at most~$2^\alpha$ sets~$X$, at most~$2^{3\alpha-6}$ sets~$Y$,
at most $(k_v+\nobreak 1)(k_e+\nobreak 1)$ pairs $h_v,h_e$, and for each~$X$, there are at most~$2^{\alpha^2}$ possible set covers~$\mathcal{P}$ and at most $(k_v+k_e+1)^\alpha$ possibilities for~$\delta'$.
Therefore $|{\cal Q}| \leq 2^\alpha 2^{3\alpha-6} (k_v+\nobreak 1)\allowbreak (k_e+1) 2^{\alpha^2} (k_v+k_e+1)^\alpha =(k_v+k_e)^{O(\alpha^2)}$.

For each ${\bf q}=(h_v,h_e,X,Y,{\cal P},\delta')\in {\cal Q}$, we construct an instance
$I_{\bf q}=(F_{\cal P},h_v,h_e,C,\delta'',w',c')$ of \textsc{DCPGGD}
such that
\begin{itemize}
\item $F_{\cal P}$ is the graph obtained
from~$F$ by adding a set of~$s$ new vertices $Z=\{z_1,\ldots,z_s\}$ and
making~$z_i$ adjacent to all the vertices of~$P_i$. If ${\cal P}=\emptyset$,
which means that $N_G(R_i^+)= X$, then we simply have that $Z=\emptyset$ and $F_{\cal P}=F$.
\item $\delta''(v)=d_{F_{\cal P}}(v)$ for $v \in Z$ and $\delta''(v)=\delta'(v)$ for $v \in V(F_{\cal P})\setminus Z$.
\item $w'(v)=k_v+1$ for $v\in (N_G(R_i)\setminus X)\cup Z$ and $w'(v)=w(v)$ for $v\in R_i$.
\item $w'(e)=k_e+1$ for $e\in (E(G[N_G(R_i)\setminus X])\setminus Y)\cup E_{F_{\cal P}}(Z)$, and $w'(e)=w(e)$ for all other edges of~$F_{\cal P}$.
\item $c'(v)=0$ for $v\in Z$ and $c'(v)=c(v)$ for $v \in V(F_{\cal P})\setminus Z$; $c'(e)=0$ for $e\in E_{F_{\cal P}}(Z)$ and $c'(e)=c(e)$ for all other edges in~$F_{\cal P}$.
\end{itemize}

Since $|Z| \leq |N_G(R_i)| \leq \alpha$, it follows that $|Z|\leq \alpha$ and therefore $\tw(F_P)\leq\tw(F)+\alpha\leq 2\alpha$.
We can check in linear time whether~$F_{\cal P}$ is planar~\cite{HT74}. If it is not, then~$I_{\bf q}$ is not a valid instance of \textsc{DCPGGD} and we set $(U_{\bf q},D_{\bf q})=(\emptyset,\emptyset)$.
Otherwise, by Lemma~\ref{lem:tw}, we can solve \textsc{DCPGGD} for~$I_{\bf q}$ in
$(\alpha(k_v+k_e))^{O(\alpha)}\cdot \poly(n)$ 
time and find a solution of minimum cost.
Let $(U_{\bf q},D_{\bf q})$ be the obtained solution of minimum cost and let $U_{\bf q}=D_{\bf q}=\emptyset$ if no solution exists.
Notice that $Z\cap U_{\bf q}=\emptyset$, because the vertices of~$Z$ have weight $k_v+1$, and~$D_{\bf q}$ has no edges incident to the vertices of~$Z$, because these edges have weight $k_e+1$.
Let
$$W_i=\bigcup_{{\bf q}\in {\cal Q}} U_{\bf q} \mbox{\  and \ }L_i= \bigcup_{{\bf q}\in {\cal Q}} D_{\bf q}.$$
Because each~$U_{\bf q}$ has at most~$k_v$ vertices and each~$D_{\bf q}$ has at most~$k_e$ edges, we obtain that
$|W_i|\leq |{\cal Q}|k_v \leq (k_v+1)(k_e+1)\cdot 2^\alpha\cdot2^{3\alpha-6}\cdot 2^{\alpha^2} \cdot (k_v+k_e+1)^\alpha\cdot k_v$
and $|L_i|\leq |{\cal Q}|k_e \leq (k_v+1)(k_e+1)\cdot 2^\alpha\cdot2^{3\alpha-6}\cdot 2^{\alpha^2}\cdot (k_v+k_e+1)^\alpha\cdot k_e$. Hence, the size of~$W_i$ and~$L_i$ is $(k_v+k_e)^{O(\alpha^2)}$.

Let $W=R_0\cup\bigcup_{i\in\{1,\ldots,p\}} W_i$ and $L=E(G[R_0])\cup \bigcup_{i\in \{1,\ldots,p\}} L_i$. Because $\max\{p,|R_0|\}\leq \alpha (k_v+2k_e)$, we have that
$|W|=(k_v+k_e)^{O(\alpha^2)}$ and $|L| = (k_v+\nobreak k_e)^{O(\alpha^2)}$. We prove the following claim.

\clm{\label{clm:second} If $(G,k_v,k_e,C,\delta,w,c)$ is a yes-instance of \textsc{DCPGGD}, then it has an efficient solution $(U,D)$ of minimum cost such that
$U\subseteq W$ and $D\subseteq L$.}

\medskip
\noindent
We prove Claim~\ref{clm:second} as follows.
Let $(U,D)$ be an efficient solution for $(G,k_v,k_e,C,\delta,w,c)$ of minimum cost such that $s=|U\setminus W|+|D\setminus L|$ is minimum. If $s=0$, then the claim is fulfilled. Suppose, for contradiction, that $s>0$. This means that there is an $i\in\{1,\ldots,p\}$ such that
$(U \cap R_i) \setminus W_i \neq \emptyset$ or $(D \cap E_G(R_i)) \setminus L_i \neq \emptyset$.

Let $X=U\cap N_G(R_i)$, $Y=D\cap E(N_G(R_i))$ and $F=G[R_i^+]-X-Y$. Let $h_v=|U\cap V(F)|$ and $h_e=|D\cap E(F)|$.
If $X\neq N_G(R_i)$, then consider the graph $H=G-U-D-R_i^+$ and let $H_1,\ldots,H_s$ denote the components of~$H$.
Next, starting with the graph $H'=G-U-D-R_i$, contract each~$H_j$ to a single vertex~$z_j$ and call the resulting graph~$H''$.
Note that $Z=\{z_1,\ldots,z_s\}$ is an independent set in~$H''$.
By the definition of protrusion decomposition, every vertex of $N_G(R_i) \setminus X$ is adjacent to at least one vertex in~$Z$.
Likewise, since $G-U-D$ is connected, every vertex~$z_j$ must have a neighbour in $N_G(R_i)\setminus X$.
If there is a vertex $z_j\in Z$ such that removing it from~$H''$ does not increase the number of components in~$H''$ and every vertex in $N_G(R_i)\setminus X$ has a neighbour in $Z \setminus \{z_j\}$ then we remove~$z_j$ from~$H''$ and from~$Z$.
Doing this exhaustively, we obtain a graph with $|Z| \leq |N_G(R_i) \setminus X| \leq \alpha$. Call this graph~$F_{\cal P}$.
Without loss of generality assume $Z=\{z_1,\ldots,z_t\}$.
Let $P_j=N_{H''}(z_j)$ for $j \in \{1,\ldots,t\}$.
Then $\mathcal{P}=\{P_1,\ldots,P_t\}$ is a set cover of $N_G(R_i)\setminus X$ containing at most~$\alpha$ sets.
If $X=N_G(R_i)$, then set ${\cal P}=\emptyset$ and $F_{\cal P}=F$.
Now~$F_{\cal P}$ is precisely the graph constructed from~$F$ and~$\mathcal{P}$ earlier.
Note that~$F_{\cal P}$ is planar since it is obtained from~$G$ by contractions, vertex deletions and edge deletions.

For each vertex $v\in N_G(R_i)\setminus X$, let~$d_v$ be the total number of vertices in $U\setminus V(F)$ adjacent to~$v$ plus the number of edges in $D\setminus E(F)$ incident to~$v$.

Let $\delta'(v)=d_{F_{\cal P}}(v)-(d_G(v)-\delta(v)-d_v)$
for $v\in N_G(R_i)\setminus X$ and $\delta'(v)=\delta(v)$ for other vertices of~$F_{\cal P}$.
Set $w', c'$ and~$\delta''$ as before.

Clearly, $I_{\bf q}=(F_{\cal P},h_v,h_e,C,\delta',w',c')$ is an
instance of \textsc{DCPGGD} when ${\bf q}=(h_v,h_e,X,Y,{\cal P},\delta')$.
Let $U'=U\cap V(F)$ and $D'=D\cap E(F)$.
Then $(U',D')$ is a solution for the instance~$I_{\bf q}$ and, therefore~$I_{\bf q}$ is a yes-instance.
 
In particular, this means that there is a solution $(U'',D'')$ for $I_{\bf q}=(F_{\cal P},h_v,h_e,C,\delta'',w',c')$ that was constructed by the aforementioned procedure for the construction of~$W_i$ and~$L_i$.
Clearly, $U''\subseteq W_i\subseteq W$ and $D''\subseteq L_i\subseteq L$.
Because our algorithm for graphs of bounded treewidth finds a solution of minimum cost, it follows that $c(U''\cup D'')\leq c(U'\cup D')$.
It remains to observe that $(\hat{U},\hat{D})$, where  $\hat{U}=(U\setminus U')\cup U''$ and $\hat{D}=(D\setminus D')\cup D''$, is a solution for $(G,k_v,k_e,C,\delta,w,c)$ with $c(\hat{U} \cup \hat{D}) \leq c(U \cup D)$, but this contradicts the choice of $(U,D)$ because $|\hat{U}\setminus W|+|\hat{D}\setminus L|<s$.
This completes the proof of Claim~\ref{clm:second}.

\medskip
\noindent
If $v\in \overline{W}=V(G)\setminus W$ and $d_G(v)=\delta(v)$, then for any efficient solution $(U,D)$ such that
$U\subseteq W$ and $D\subseteq L$, $v$ is not adjacent to a vertex of~$U$.
Moreover, $E_G(v)\cap D=\emptyset$, by Observation~\ref{obs:norm-2}.
This implies that it is safe to apply the following rule without destroying the statement of Claim~\ref{clm:second}.

\noindent
\begin{quote}
{\bf Set adjustment rule.} If there is a vertex $v\in \overline{W}$ with $d_G(v)=\delta(v)$, then set $W=W\setminus N_G(v)$ and set $L=L\setminus E_G(v)$.
\end{quote}

The sets~$W$ and~$L$ give us the following possibility to remove some vertices when there is the unique possibility to satisfy degree restrictions.

\begin{quote}
\noindent
{\bf Vertex deletion rule.} If there is a vertex $v\in \overline{W}$ with $d_G(v)>\delta(v)$ such that $E_G(v)\cap L=\emptyset$ then
\begin{itemize}
\item if $|N_G(v)\cap W|<d_G(v)-\delta(v)$, then return a no-answer and stop;
\item if $|N_G(v)\cap W|=d_G(v)-\delta(v)$, then delete the vertices of $N_G(v)\cap W$ and set $k_v=k_v-w(N_G(v)\cap W)$ and $C=C-c(N_G(v)\cap W)$; if $k_v<0$ or $C<0$, then return a no-answer and stop.
\end{itemize}
\end{quote}

We exhaustively apply the above
two rules
until they can no longer be further applied.
Let $S=\{v\in V(G) \mid d_G(v)=\delta(v)\}\setminus W$. Notice that $N_G(S)\subseteq \overline{W}$ by the {\bf set adjustment rule}. It is easy to see that the following rule is safe.

\begin{quote}
\noindent
{\bf $S$-neighbour rule.} If~$v$ has~$k$ neighbours in~$S$, and $\delta(v)<k$ then return a no-answer and stop.
\end{quote}

We apply the {\bf $S$-neighbour rule} exhaustively. Next, we contract the edges of~$G[S]$.

\begin{quote}
\noindent
{\bf $S$-contraction rule 1.} If~$G$ has two adjacent vertices $u,v\in S$, then we do 
as follows.
\begin{itemize}
\item For any vertex $x\in V(G)\setminus\{u,v\}$ such that $xu,xv\in E(G)$, set
$\delta(x)=\delta(x)-1$.
\item Contract~$uv$; let~$z$ denote the vertex obtained from~$u$ and~$v$.
\item Set $w(z)=k_v+1$ and $c(z)=0$.
\item For $e\in E_G(z)$, set $w(e)=k_e+1$, $c(e)=0$.
\end{itemize}
\end{quote}

We now show that the {\bf $S$-contraction rule 1} is safe. To do this, let $(G',k_v,k_e,C,\delta',w',c')$ denote the instance obtained by an application of the rule.
Let $(U,D)$ be an efficient solution for $(G,k_v,k_e,C,\delta,w,c)$ such that $U \subseteq W$ and $D \subseteq L$. By Observation~\ref{obs:norm-2}, $D$ has no edges incident to~$u$ or~$v$. Also $u,v\notin U$, because $u,v\in S$.
Notice that $\delta(x) \geq 2$ by the {\bf $S$-neighbour rule}.
If $(U',D')$ is an efficient solution for $(G',k_v,k_e,C,\delta',w',c')$, then~$D'$ has no edges~$e$ incident to~$z$, because $w'(e)>k_e$. Similarly, $z\notin U'$ because $w'(z)>k_v$.
Also note that $\delta'(x) \geq 1$ because of the {\bf $S$-neighbour rule}.
We obtain that $(U',D')$ is a solution for the original instance.

We exhaustively apply {\bf $S$-contraction rule 1}. Note that~$S$ is an independent set
 in the obtained instance.

\begin{quote}
\noindent
{\bf Stopping rule.} If~$G$ has two components that contain vertices of~$\overline{W}$, then return a no-answer and stop. Suppose~$\overline{W}$ contains a vertex~$v$ which is isolated in~$G$. In this case if $w(V(G)\setminus \{v\})\leq k_v$ and $c(V(G)\setminus \{v\})\leq C$, then return a $(V(G)\setminus \{v\},\emptyset)$ as a solution and stop, otherwise, return a no-answer and stop.
\end{quote}

Clearly, if~$G$ has two components that contain vertices of~$\overline{W}$, then one of these components should be deleted. By Claim~\ref{clm:second}, we know that if there is a solution then there must be a minimal cost solution that does not delete any vertices of~$\overline{W}$. This contradiction means that there is no solution.
If $v\in\overline{W}$ is an isolated vertex of~$G$, then because $d_G(v)\geq \delta(v)$, it follows that $\delta(v)=d_G(v)$ and we conclude that $(V(G)\setminus\{v\},\emptyset)$ must be a solution. Therefore, the {\bf stopping rule} is safe.

Assume that we do not stop at this stage. Then we obtain the instance $(G,k_v,k_e,C,\delta,w,c)$
of the problem and sets $W,L$ such that
the sets $S=\{v\in V(G) \mid d_G(v)=\delta(v)\}\setminus W$ and $T=\{v\in V(G)\mid d_G(v)>\delta(v)\}\setminus W$
form a partition of~$\overline{W}$ (note that these sets may be empty),~$S$ is an independent set, no vertex of~$\overline{W}$ is isolated in~$G$, and $L\cap E(S)=\emptyset$.
Also for any $v\in S$, $N_G(v)\subseteq T$, by the {\bf set adjustment rule}.

By Claim~\ref{clm:second}, it is safe to modify the weights as follows.

\begin{quote}
\noindent
{\bf Weight adjustment rule.} Set $w(v)=k_v+1$ for $v\in V(G)\setminus W$ and set $w(e)=k_e+1$ for $e\in E(G)\setminus L$.\\
\end{quote}

Our next aim is to bound the size of~$S$. In the proof of Theorem~\ref{thm:ker-1} we simply deleted the vertices of~$S$ and adjusted~$\delta$ appropriately. Here we need to preserve connectivity. Hence, we delete vertices only if this does not destroy connectivity and we use contractions otherwise.

\begin{quote}
\noindent
{\bf $S$-deletion rule.} If, for a vertex $v\in S$, one of the following is fulfilled
\begin{itemize}
\item $d_G(v)=1$,
\item $d_G(v)=2$ and for $\{x,y\}=N_G(v)$, $xy\in E(G)\setminus L$ or
\item there is a vertex $u\in S$ such that $u\neq v$ and $N_G(v)\subseteq N_G(u)$,
\end{itemize}
then delete~$v$ and set
$\delta(x)=\delta(x)-1$
for $x\in N_G(v)$; if $\delta(x)<0$, then return a no-answer and stop.
\end{quote}

\begin{quote}
\noindent
{\bf $S$-contraction rule~2.} If $v\in S$, then let $u\in N_G(v)$ and let $\Delta=d_G(u)-\delta(u)$. For every $vx\in E(v)\setminus\{vu\}$ such that $ux\in E(G)$,
delete~$vx$, add a vertex~$z$ adjacent to~$v$ and~$x$, set $\delta(z)=2$, $w(z)=k_v+1$, $w(zx)=w(zv)=k_e+1$ and
$c(z)=c(zx)=c(zv)=0$ and add~$z$ to~$S$.
Then contract~$uv$ in the obtained graph and set $\delta(y)=d_G(y)+\Delta$,
$w(y)=k_v+1$ and $c(y)=0$ for the vertex~$y$ obtained from~$u$ and~$v$.
\end{quote}

The above two rules are safe, because $N_G(v)\subseteq T$ and the vertices of~$T$ are not included in any solutions.

We apply these rules exhaustively. First we apply the {\bf $S$-deletion rule} whenever it is possible. Then we apply the
{\bf $S$-contraction rule 2}. Notice that the {\bf $S$-contraction rule} creates new vertices that are obtained by subdividing the edges of~$E(v)$ and they are placed in~$S$.
Therefore, it may happen that we can again apply
the {\bf $S$-deletion rule}, and in this case we do so. Finally, we get the graph~$G$ with the following properties:
\begin{enumerate}[(i)]
\item for any $v\in S$, $d_G(v)=2$ and for $\{x,y\}=N_G(v)$, $x,y\in V(L)$, and
\item for any distinct $u,v\in S$, $N_G(u)\neq N_G(v)$ (by the {\bf $S$-deletion rule}). In particular, this means that $|S|\leq (2|L|)^2$.
\end{enumerate}

Let $W'=W\cup V(L)\cup S$ and $T'=T\setminus V(L)$. Clearly,~$W'$ and~$T'$ form a partition of~$V(G)$ (one of the sets could be empty).
Notice that
$|W'|\leq |W|+3|L|=(k_v+k_e)^{O(\alpha^2)}$. Now our aim is to bound the size of~$T'$.

\begin{quote}
\noindent
{\bf $T'$-deletion rule.} If there are two distinct $u,v\in T'$ such that $N_G(u)\cap W'=N_G(v)\cap W'$, $d_G(u)-\delta(u)=d_G(v)-\delta(v)$ and~$v$ is an isolated vertex of~$G[T']$, then
delete~$v$ and set $\delta(x)=\max\{0,\delta(x)-1\}$ for $x\in N_G(u)$.
\end{quote}

To see that the {\bf $T'$-deletion rule} is safe, it is sufficient to recall that $\delta(v)\neq d_G(v)$ because we already applied the {\bf vertex deletion rule}.
Hence, $|N_G(v) \cap W| > d_G(v) - \delta(v)$, so in any solution~$u$ and~$v$ have common adjacent vertices that are not deleted.
Because $E(v)\cap L=\emptyset$ and $E(u)\cap L=\emptyset$, the edges of~$E(u)$ and~$E(v)$ cannot be deleted. Therefore, we maintain connectivity by the {\bf $T'$-deletion rule}.
It is straightforward to verify that the {\bf $T'$-deletion rule} is safe with respect to degree restrictions.

\begin{quote}
\noindent
{\bf $T'$-contraction rule.} If there are two distinct $u,v\in T'$ such that $N_G(u)\cap W'=N_G(v)\cap W'$, $d_G(u)-\delta(u)=d_G(v)-\delta(v)$ and~$u$ and~$v$ are in the same component of~$G[T']$, do the following.
\begin{itemize}
\item For each $vx\in E(v)$ such that $x\notin T'$, delete~$vx$ and set $\delta(x)=\max\{0,\delta(x)-1\}$.
\item Let $y\in N_G(v)$ in the obtained graph and let $\Delta=d_G(y)-\delta(y)$. For every $x\in N_G(v)\cap N_G(y)$, set $\delta(x)=\max\{0,\delta(x)-1\}$.
Contract~$yv$ to a vertex~$z$ and set $\delta(z)=d_G(z)+\Delta$, $w(z)=k_v+1$, $c(z)=0$ and let $w(zx)=k_e+1$, $c(zx)=0$ for every $x\in N_G(z)$.
Add~$z$ to~$T'$.
\end{itemize}
\end{quote}

To show that the {\bf $T'$-contraction rule} is safe, again recall that $\delta(v)\neq |N_G(v)\cap W'|$ because we already applied the {\bf vertex deletion rule}.
Hence, in any solution,~$u$ and~$v$ have common adjacent vertices in~$W'$ that are not deleted.
Because $E(T')\cap L=\emptyset$, the edges of~$E(T')$ cannot be deleted. Therefore, we do not destroy connectivity by the {\bf $T'$-contraction rule}.
It is straightforward to verify that the {\bf $T'$-contraction rule} is safe with respect to degree restrictions.

We exhaustively apply the above two rules. First, we apply the {\bf $T'$-deletion rule} if possible. Then we apply the {\bf $T'$-contraction rule} and if after the application of this rule we again can again apply the {\bf $T'$-deletion rule}, we do so.

For $i=0,1,2$, let $T_i=\{v\in T': |N_G(v)\cap W'|=i\}$, and $T_{\geq 3}=\{v\in T':~ |N_G(v)\cap W'|\geq 3\}$. Because we exhaustively applied the {\bf vertex deletion rule}, we have that
$T_0=T_1=\emptyset$. By Lemma~\ref{lem:bound-bip}, $|T_{\geq 3}|\leq 2|N_G(T')|-4\leq 2|W'|-4$ (or $T_{\geq 3}$ is empty). Therefore we have that~$G[T']$ has at most $2|W'|$ components that contain vertices of $T_{\geq 3}$. It remains to evaluate~$|T_2|$. Because of the {\bf vertex deletion rule}, for any $v\in T_2$, $d_G(v)-\delta(v)=1$ as otherwise we would either stop or delete the neighbours of~$v$ in~$W$.
Any two distinct $u,v\in T_2$ such that $N_G(u)\cap W'=N_G(v)\cap W'$ belong to distinct components of~$G[T']$ by the {\bf $T'$-deletion rule} and the {\bf $T'$-contraction rule}. There are at most~$\binom{|W'|}{2}$ such components that are isolated vertices of~$G[T']$ and there are at most $\binom{|W'|}{2}|T_{\geq 3}|$ vertices in~$T_2$ that there are in same components with the vertices of~$T_{\geq 3}$, and the total number of such vertices is at most $\binom{|W'|}{2}(2|W'|)$. Let~$T_2'$ denote the set of remaining vertices of~$T_2$.
Observe that each component of~$G[T_2']$ is a component of~$G[T']$ and has at least two vertices of~$T_2$. Moreover, for any two vertices~$u$ and~$v$ in the same component of~$G[T_2']$, $N_G(u)\cap W'\neq N_G(v)\cap W'$. Let~$G'$ be the graph obtained from~$G$ by contracting the edges of~$G[T_2']$. Each component of~$G[T_2']$ is contracted into a single vertex. Let~$Z$ denote the set of vertices of~$G'$ obtained from the components of~$G[T_2']$. The set~$Z$ is independent and for each $v\in Z$, $d_{G'}(v)\geq 3$. By Lemma~\ref{lem:bound-bip}, $|Z|\leq 2|N_{G'}(Z)|-4\leq 2|W'|-4$ (or~$Z$ is empty). Hence,~$G[T_2']$ has at most $2|W'|$ components. Because each component has at most~$\binom{|W'|}{2}$ vertices, $|T_2'|\leq \binom{|W|}{2}(2|W'|)$. Hence, $|T_2|\leq \binom{|W'|}{2}(4|W'|+1)$.
We have that $|V(G)|=|W'|+|T'|=|W'|+|T_0|+|T_1|+|T_2|+|T_{\geq 3}|=O(|W'|^3)$. Since~$W'$ has $(k_v+k_e)^{O(\alpha^2)}$ vertices, we obtain that the obtained graph~$G$ has size~$k^{O(1)}$ where $k=k_v+k_e$, i.e. we have a polynomial kernel.

To complete the proof, it remains to observe that the construction of the normalized instance can be done in polynomial time by Lemma~\ref{lem:normalization}, the construction of~$W$ and~$L$ can be done in polynomial time by Lemma~\ref{lem:tw}, and all the subsequent reduction rules can be applied in polynomial time.
\qed
\end{proof}

\section{Conclusions}

We proved that \textsc{DPGGD} and \textsc{DCPGGD} are \classNP-complete but allow polynomial kernels when parameterized by $k_v+k_e$. These problems generalize
the {\sc Degree Constrained Editing($S$)} problem and its connected variant for
$S=\{\ed,\vd\}$; this can be seen, for instance, by testing all possible pairs $k_v,k_e$ with $k_v+k_e=k$ or by a slight adjustment of our algorithms.
Note that by setting $k_v=0$ or $k_e=0$ we obtain the same results
for $S=\{\ed\}$ and $S=\{\vd\}$, respectively (recall though that for $S=\{\ed\}$ this is not so surprising, as the less general problem {\sc Degree Constrained Editing($\{\ed\}$)}
is polynomial-time solvable for general graphs).

Several open problems remain. We note that graph modification problems that permit edge additions are less natural to consider for planar graphs, because the class of planar graphs is not closed under edge
addition.
However, we could allow other, more appropriate, operations such as edge contractions and vertex dissolutions when considering planar graphs.
Belmonte et al.~\cite{BGHP14} considered the setting in which only edge contractions are allowed and
obtained initial results for general graphs that extend the work of Mathieson and Szeider~\cite{MathiesonS12}
on {\sc Degree Constrained Editing($S$)} in this direction.

\bibliographystyle{splncs03}
\bibliography{Graph_edit}
\end{document}